\documentclass[prb,aps,10pt,english,twocolumn,amsmath,amssymb,superscriptaddress]{revtex4-2}
\usepackage{amsmath}
\usepackage{graphicx}
\usepackage{amssymb}
\usepackage{dsfont}
\usepackage{color}
\usepackage[bookmarks=true,colorlinks,linkcolor=blue,urlcolor=blue,citecolor=blue]{hyperref}
\usepackage{braket}
\usepackage{amsthm}
\usepackage[normalem]{ulem}
  
\newcommand{\new}[1]{\textcolor{black}{{#1}}}

\usepackage{hyperref}
\usepackage{tikz}
\usetikzlibrary{calc}

\newcommand{\be}{\begin{equation}}
\newcommand{\ee}{\end{equation}}
\newcommand{\bea}{\begin{eqnarray}}
\newcommand{\eea}{\end{eqnarray}}
\newcommand{\mc}{\mathcal}
\newcommand{\mb}{\mathbf}

\newtheorem{lemma}{Lemma}

\newtheorem{theorem}{Theorem}
\newtheorem{proposition}{Proposition}

\begin{document}

\title{Partons from stabilizer codes}

\author{Rafael A. Mac\^{e}do}
\affiliation{Departamento de F\'{i}sica Te\'{o}rica e Experimental, Universidade Federal do Rio Grande do Norte, Natal, RN, 59078-970, Brazil}
\affiliation{International Institute of Physics, Universidade Federal do Rio Grande do Norte, Natal, RN, 59078-970, Brazil}
\author{Carlo C. Bellinati}
\affiliation{
Instituto de F\'isica, Universidade de S\~ao Paulo, 05315-970 S\~ao Paulo, SP, Brazil}
\author{Weslei B. Fontana}
\affiliation{Department of Physics, National Tsing Hua University, Hsinchu 30013, Taiwan}
\author{Eric C. Andrade}
\affiliation{
Instituto de F\'isica, Universidade de S\~ao Paulo, 05315-970 S\~ao Paulo, SP, Brazil}
\author{Rodrigo G. Pereira}
\affiliation{Departamento de F\'{i}sica Te\'{o}rica e Experimental, Universidade Federal do Rio Grande do Norte, Natal, RN, 59078-970, Brazil}
\affiliation{International Institute of Physics, Universidade Federal do Rio Grande do Norte, Natal, RN, 59078-970, Brazil}

\begin{abstract}
The Gutzwiller projection of fermionic wave functions is a well-established method for generating variational wave functions describing exotic states of matter, such as quantum spin liquids. We investigate the conditions under which a projected wave function constructed from fermionic partons  can be rigorously shown to possess topological order. We demonstrate that these conditions can be precisely determined  in the case of projected Majorana stabilizer codes. We then use matrix product states to study  states that interpolate between two distinct Majorana fermion codes, one yielding a  $\mathbb Z_2$ spin liquid and the other a trivial polarized state upon projection. While the free-fermion states are adiabatically connected,  we find that the projected states undergo  a  phase transition detected by the topological entanglement entropy. Our work underscores the profound impact of the Gutzwiller projection and cautions against inferring properties of quantum spin liquids solely from their unprojected counterparts.

\end{abstract}
\maketitle

\section{Introduction}

Topologically ordered states are fascinating examples of how quantum phases of matter find  applications in quantum information processing \cite{zeng2019quantum}. Originally defined in terms of a robust  ground state degeneracy that depends on the genus of a compact two-dimensional   surface \cite{Wen1990}, topological order is also associated with fractional excitations with anyonic statistics \cite{Oshikawa2006} and long-range entanglement detected by the topological entanglement entropy \cite{kitaev2006topological,levin2006detecting}. The nonlocal properties of anyons can be exploited  in quantum error correcting codes  \cite{kitaev2003fault} and have been demonstrated  in recent experiments in quantum simulation platforms  \cite{Satzinger2021,Semeghini2021,IqbalCP2024}.

Topological order is a hallmark of gapped quantum spin liquids (QSLs) \cite{savary2016quantum,ZhouRMP2017,Broholm2020}.  In particular, the paradigmatic example of a QSL, Anderson's resonating  valence bond (RVB) state \cite{Anderson1987}, exhibits  two types of excitations with nontrivial mutual statistics that correspond to the anyons of $\mathbb Z_2$ topological order \cite{Kivelson1987,sachdev2023quantum}. The standard analytical approach to describe QSLs is to formulate a mean-field theory \cite{Baskaran1987,WenPRB1991} in which the local spin degrees of freedom are fractionalized into  quasiparticles called partons \cite{savary2016quantum,sachdev2023quantum}.  This type of approximation  neglects the   fluctuations of an emergent gauge field that mediates interactions between the partons, but  it correctly captures some universal   properties  if the effective gauge theory is in a deconfined phase.

The parton mean-field approach  provides a way to construct wave functions for QSLs. Starting from the ground state of a trial mean-field Hamiltonian for fermionic partons, one obtains a spin wave function by applying a Gutzwiller projection that imposes a local single-occupancy constraint  \cite{GROS198953}. From the projected fermionic state, physical observables can then be calculated  numerically using    variational Monte Carlo \cite{GROS198953,becca2017quantum}. A more recent alternative employs matrix product states (MPS) to compute a compressed representation of the projected state \cite{Jin2020,wu2020tensor,petrica2021finite}.

The usual expectation is that the unprojected fermionic state and the spectrum of the mean-field Hamiltonian give insight into  the properties of the physical spin state. For instance, the Gutzwiller projection of a BCS wave function generates a wave function for the short-range RVB state \cite{Anderson1987}, and the superconducting pairing in the mean-field ansatz is interpreted in terms of  the Anderson-Higgs mechanism that gaps out the gauge field  in the  $\mathbb Z_2$ QSL \cite{WenPRB1991}.  However,  a superconducting wave function does not necessarily lead to a topologically ordered state upon projection \cite{IvanovPRB2002,Paramekanti2005,Li2007}. Moreover, selecting a projected wave function by  minimizing the energy of a given spin Hamiltonian may fail to reveal topological order, as it is possible to construct a topologically trivial variational state that provides an arbitrarily accurate approximation to the ground-state energy density in a topological phase \cite{Balents2014}.  In practice, demonstrating the presence of topological order based on numerical results for finite-size systems can be a challenge. In Ref. \cite{Zhang2011}, the authors used Monte Carlo techniques to calculate the topological entanglement entropy and concluded in favor of topological order in  a projected BCS wave function, but later work \cite{ZhangErratum,pei2013topological} found that the result deviates  significantly from the expected value for a $\mathbb Z_2$ QSL. 

In this work, we construct  projected fermionic states for which we can rigorously establish or rule out topological order. The idea is to employ   stabilizer codes \cite{zeng2019quantum}, which can be defined for  spins as well as  fermions. For spins, Kitaev's toric code \cite{kitaev2003fault} and Wen's plaquette model \cite{wen2003quantum} represent topologically ordered stabilizer states. For fermionic systems, the formalism was developed in Ref. \cite{bravyi2010majorana} in the form of Majorana  fermion codes. Here we consider parent Hamiltonians for free fermions on lattices with four Majorana fermions per unit cell, which can be written as a Bogoliubov-de-Gennes Hamiltonian for spin-1/2 fermionic partons. We determine the precise conditions that guarantee that the projection of a Majorana stabilizer state leads to a topologically ordered state whose anyons satisfy the toric code fusion rules. Going beyond these exact results, we use MPS  to study projected fermionic states that interpolate between trivial and topologically ordered projected Majorana fermion codes.  We compute  the topological entanglement entropy and  find a topological transition in the projected state as we vary the state parameters, even though the  unprojected fermionic state  remains gapped along the interpolating path.  

The remainder of this paper is organized as follows. In Sec. \ref{sec:partons}, we give an overview of parton mean-field theories and the Gutzwiller projection. In Sec.  \ref{sec:mfcprojection}, we describe the formalism of stabilizer codes,  both for spin states and Majorana fermions. Section \ref{sec:projectiondatum} contains our  exact results concerning the topological order in projected Majorana fermion codes. In Sec. \ref{sec:numericalprojection}, we discuss the interpolation path between projected Majorana fermion codes and present our numerical results for the topological phase transition. Our concluding remarks can be found in Sec. \ref{sec:conclusion}. Detailed proofs of the lemmas and the  theorem in Sec. \ref{sec:projectiondatum} are relegated to the appendices.

\section{$\mathbb Z_2$ Partons \label{sec:partons}}
 
In this section, we briefly  review the basic aspects  of parton mean-field theories, especially in the context of $\mathbb Z_2$ QSLs, elucidating the problem that  motivates our work.

Consider a  two-dimensional  lattice $\Lambda$, with sites labeled by $i\in \Lambda$. On each site we place a spinful fermionic variable $f_{i \alpha}$, with spin index $\sigma\in \{\uparrow,\downarrow\}$. A general  free-fermion Hamiltonian in this setup can be written as
\be
H_\mathrm{ff} = \sum_{\langle i, j \rangle}\sum_{\alpha, \beta }(t^{\alpha \beta}_{ij}f^\dagger_{i\alpha}f^{\phantom\dagger}_{j \beta} + \Delta^{\alpha \beta}_{ij} f^\dagger_{i\alpha} f^\dagger_{j\beta} + \mathrm{H.c.})\;,
\label{eq:freefermionZ2}
\ee
where $\braket{i,j}$ denotes nearest-neighbor bonds. In addition to spin-dependent hoppings $t^{\alpha \beta}_{ij}$,  we   allow terms that break  fermion number conservation by incorporating pairing amplitudes $\Delta^{\alpha \beta}_{ij}$. 

The main goal of this approach  is to construct  a spin state out of the ground state of $H_\mathrm{ff}$, denoted as $|\psi_\mathrm{ff}\rangle$. We introduce spin-1/2  Abrikosov fermions as   \cite{savary2016quantum}
\begin{equation}
    \boldsymbol{\sigma}_i = f^\dagger_{i \alpha}\boldsymbol{\tau}_{\alpha\beta}f^{\phantom\dagger}_{i\beta} \;,
    \label{eq:abrikosovpartons}
\end{equation}
where $\boldsymbol \tau = (\tau^x, \tau^y, \tau^z)$ represents the respective Pauli matrices with matrix indices $\alpha,\,\beta$. Clearly,   there is a mismatch between the two representations since  the fermionic Hilbert space at each site is four-dimensional, consisting of states with  a single  (spin-up or spin-down) fermion or   states with no fermions or a double occupancy. To select the spin-1/2 states, we  project out   the other two states using the Gutzwiller projection
\begin{equation}
    |\psi_G \rangle  = P_G|\psi_\mathrm{ff}\rangle \equiv \prod_i n_i (2-n_i) |\psi_\mathrm{ff}\rangle\;,
    \label{eq:gutzwillerproj}
\end{equation}
where $n_i = \sum_{\alpha \in \{\uparrow, \downarrow\}} f^\dagger_{i\alpha}f^{\phantom\dagger}_{i\alpha}$ is the fermion number at site $i$. Within the image of  $P_G$, the    operators defined in Eq. (\ref{eq:abrikosovpartons})  obey the usual SU(2) algebra, and we  denote the physical Pauli operators by $\boldsymbol{\sigma}_i=(X_i,Y_i,Z_i)$. 

The  low-energy properties of $|\psi_G \rangle$ can be analyzed by embedding  the problem  into a $\mathbb Z_2$ gauge theory; see Ref. \cite{sachdev2023quantum} for   details. The resulting theory  is inherently strongly interacting. The mean-field approximation becomes controlled if one generalizes the model to contain $N$ fermion flavors and takes the limit $N\to \infty$.  The more physically relevant case, $N= 2$, can be explored by numerical methods \cite{becca2017quantum,Jin2020,wu2020tensor, petrica2021finite}. 

The Gutzwiller projection establishes a relation between spin phases and free-fermion phases, meaning that one can use the known categories of fermionic states to classify the different QSL  phases \cite{sachdev2023quantum}. In particular, a BCS supercondutor classified as  a trivial gapped phase  \cite{Chiu2016} corresponds, in the parton theory, to a $\mathbb Z_2$ QSL.  This represents a profound transformation in the nature of the state: The Gutzwiller projection converts a trivial, short-range-entangled state into a long-range-entangled state characterized by anyonic excitations. From the perspective of the gauge theory, this phenomenon is attributed to the deconfinement of the associated charges. However, the projection can also lead to a trivial spin state, which in the gauge theory corresponds to strong gauge fluctuations that invalidate the   mean-field approximation and lead to the confinement of the fermionic  partons into conventional bosonic spin excitations.

\section{Stabilizer codes \label{sec:mfcprojection}}
In this section, we   develop the tools needed to construct    models in which  the properties of the projected states can be exactly computed. Readers familiar with topological stabilizer codes can skip the first subsection. 

\subsection{Topological stabilizer codes\label{topcodes}}

In this section, we review the properties of stabilizer states and codes \cite{gottesman1997stabilizer}, which play a central role in our construction, with a particular emphasis on results related to topological codes.

The main object to be defined is the Pauli group of $n$ qubits, denoted by $\mathcal P_n  = \langle \{iI, X_j, Y_j, Z_j\}_{j=1}^n \rangle$,  consisting of all $n$ qubit Pauli strings. Here  $I$ is the identity and $\langle \mathcal G \rangle $ denotes   the group generated by a set $\mathcal G$.  A stabilizer group is an Abelian subgroup $\mathcal S \subseteq \mathcal P_n$ with $-I \notin \mathcal S$. These groups are particularly significant because they define a \emph{stabilizer code}:
\be
V_\mathcal S  \equiv \{|\psi\rangle \in (\mathbb C^2)^{\otimes n} \;|\; s |\psi\rangle = |\psi\rangle,\;\forall s \in \mathcal S\}\;,
\label{eq: stabilizercode}
\ee
which represents the set of all states stabilized by the operators in $\mathcal S$. The name comes from the intuiton that we are encoding $n$ qubits into $k $ logical qubits  \cite{nielsen2002quantum}. Here $k$ is called the encoding rate and is  given by\be
k= \log_ 2 \mathrm{dim}(V_\mathcal S).
\ee 

Another important measure of a code is its robustness. Let $\ell$ be a logical operator that causes  transitions between  two orthogonal code states $\ket{i},\,\ket{j} \in V_\mathcal S$,  i.e., $\ell \ket{i} \propto \ket{j}$. The action of these logical operators can be interpreted as creating errors (or excitations) in the code.  The code distance $d$ corresponds to the minimal ``size'' of the operator that is needed to mix the code states. More precisely,   consider the set $\mathcal L$ of logical operators of $\mathcal S$, constructed as the centralizer of $\mathcal S$ within $\mathcal P_n$ modulo the stabilizer elements, i.e.,  $\mathcal L = \mathrm C_{\mathcal P_n}(\mathcal S) \backslash \mathcal S$. The code distance is defined as
\be
d =  \min_{\ell  \in \mathcal L }|\mathrm{supp}(\ell)|\;,
\label{eq:distanceform}
\ee
where $|\mathrm{supp}(\ell)|$ is the number of sites in the support of $\ell$. Codes with larger values of  $d$ have a natural notion of robustness, as the states within the code can only be  mixed   at high orders in a perturbative analysis of local interactions. 

Various kinds of gapped phases can be seen   as stabilizer codes.  For instance, the $\mathbb Z_2$-symmetry-breaking  phase of the Ising  chain can be described by the code stabilized by $\mathcal S_\mathrm{Ising} = \langle \{Z_i Z_{i+1}\}_{i=1}^n\rangle$, which stabilizes the subspace spanned by $|0\rangle^{\otimes n}$ and $|1\rangle^{\otimes n}$, with $k=1$. 
 Another notable example is Kitaev's toric code \cite{kitaev2006topological}, defined by the stabilizer group $\mathcal S_\mathrm{TC} = \langle \{X_i X_j X_k X_l\}_{ijkl \in \square}, \{Z_i Z_j Z_k Z_l\}_{ijkl \in +}\rangle$. Here, qubits reside on the edges of a square lattice embedded on a torus, with $\square$, $+$ referring to the plaquettes and vertices of the lattice, respectively. The toric code is an exactly solvable example of a $\mathbb Z_2$ topologically ordered state, since its encoding rate only depends on the topology of the manifold where it is defined. 

Another  useful feature of stabilizer codes is that they can    be naturally described using a Hamiltonian framework. For every stabilizer group $\mathcal{S}$, we can construct a parent Hamiltonian
\be
H = -\sum_{g \in \mathrm{gen(\mathcal S)}}g\;,
\label{eq:Hstab}
\ee
where $\mathrm{gen}(\mathcal S)$ is the generating set of $\mathcal S$. For the two examples mentioned above, the corresponding Hamiltonians are, as expected, the classical Ising chain and the   toric code Hamiltonian. The encoding rate $k$ is related to the ground state degeneracy by  $\text{GSD} = 2^k$. It can be obtained from the relation $k = n-|\mathrm{gen}(\mathcal S)|$, where $|\mathrm{gen}(\mathcal S)|$ is the number of generators. Furthermore, if a perturbation $V$ is added to the Hamiltonian, so that the ground states are no longer stabilizer states, it can be rigorously shown that the order of perturbation theory at which  tunneling between the code states occurs scales   with the distance \cite{bravyi2010topological}, supporting the interpretation of $d$ as a robustness measure. 

Usually, the code data are organized into a tuple, referring to $\mathcal S$ as a $[[n,k,d]]$ stabilizer code. For the Ising chain, any  $Z_j$ operator couples the GHZ states $\frac1{\sqrt2}(|0\rangle^{\otimes n}+|1\rangle^{\otimes n})$ and $\frac1{\sqrt2}(|0\rangle^{\otimes n}-|1\rangle^{\otimes n})$ in the code. Thus, $d=1$, and we have  a $[[n,1,1]]$ code. For the toric code on a two-dimensional surface with genus $g$,  any Pauli operator that  acts nontrivially within the ground state manifold   must  wrap its support around the possible cycles of the surface, which leads to a code distance   $d= \sqrt n$. The topological degeneracy is $4^{g}$,  corresponding to  the encoding rate   $k=2g$. From this perspective, the toric code is identified as a $[[n,2g,\sqrt n]]$ code.

One can argue that the toric code and its corresponding phase represent the   most robust stabilizer quantum liquid in two dimensions. This claim is supported by two key results. First, it has been shown \cite{brayvi2010tradeoffs} that, for a geometrically local stabilizer group whose generators (and thus the Hamiltonian) are defined on a  two-dimensional Euclidean lattice,  the corresponding $[[n,k,d]]$ code satisfies $k d^{2} \leq c n$, where $c=O(1)$ is a constant. For codes with $k=O(1)$, also referred to as gapped quantum  liquids \cite{zeng2015gapped}, it follows that the maximum code distance is $O(\sqrt n)$.  Second,  it was  demonstrated in Ref. \cite{haah2016classification} that any two-dimensional translationally invariant stabilizer group that saturates this bound is equivalent, up to finite-depth local unitaries, to copies of the toric code.
While  the  code distance may depend on the specific  geometry if we consider other two-dimensional lattices, as we will do in Sec. \ref{sec:projectiondatum}, the  encoding rate and anyonic excitations of this stabilizer quantum liquid remain unaffected.

\subsection{Majorana fermion codes\label{MFC}}

Fermionic systems are   ubiquitous in physics. Therefore, a natural question is whether  the stabilizer formalism  can also be applied to such systems. This generalization was pioneered in  Ref. \cite{bravyi2010majorana} using Majorana fermions.  Here we build on this theory adapting the notation for the implementation of the Gutzwiller projection to be discussed in Sec. \ref{sec:projectiondatum}. 

Consider Majorana operators $\gamma^a_{j}$, where $j\in\{1,\dots,n\}$ is the site index  and  $a\in\{0,\,\dots,\,2N_F-1\}$ labels  the fermionic species at each site. The  local Hilbert space  at each site is equivalent to  the Fock space $\mathcal F_{N_F}$ of $N_F$ complex fermions.  The Majorana  operators  are characterized by the Clifford algebra
\be
\left\{\gamma^a_{j}, \gamma^b_{j'}\right\} = 2\delta^{ab}\delta_{jj'}.
\ee
 An analog of the Pauli group can be defined by first introducing the group of Majorana operators, $\mathrm{Maj}(2N_Fn) = \langle \{iI, \gamma^0_j, \gamma^2_j, \cdots, \gamma^{2N_F-1}_j\}_{j=1}^n \rangle$. A subtle point regards the nonlocality of operators due to the anticommutation rules on different sites. To ensure locality, we     restrict ourselves to the parity-even subgroup, $\mathrm{Maj}^+(2N_Fn) \subseteq \mathrm{Maj}(2N_Fn)$, comprising  products of an even number of Majorana operators. Such products trivially commute on different sites, and are hence local in a standard fashion. 

Let us define the Majorana stabilizer group as an Abelian subgroup $\mathcal M \subseteq \mathrm{Maj}^+(2N_Fn)$  with $-I \notin \mathcal M$. The space
\be
V_\mathcal M \equiv \{|\psi\rangle \in  \mathcal F_{N_Fn} \;|\; m |\psi\rangle =|\psi\rangle, \; \forall m \in \mathcal M\}\;,
\ee
is called the Majorana stabilizer code, in analogy with the definition in Eq.  \eqref{eq: stabilizercode}. Its encoding  rate is given by $k=N_Fn-|\mathrm{gen}(\mathcal M)|$. Moreover,  the code distance is defined as in Eq. (\ref{eq:distanceform}), with the optimized set being $\mathcal L = \mathrm{C}_{\mathrm{Maj}^+(2N_Fn)}(\mathcal M) \backslash \mathcal M$, describing the set of operators which act nontrivially in the code. As before, it is convenient to express the code data as $[[n,k,d]]_F$, where the $F$ subscript indicates that the code is built out of the Majorana stabilizer group $\mathcal M$.

Given a group of Majorana operators, we can  construct a   parent Hamiltonian whose ground states are in a (free-fermionic) trivial gapped phase  and correspond to a code space $V_\mathcal{M}$. For this purpose, consider a Majorana stabilizer group $\mathcal M \subseteq \mathrm{Maj}^+(2N_Fn)$ that  satisfies the  following two conditions: ({\it i}) \emph{free-fermion}, i.e., all of its generators correspond to  Majorana fermion bilinears; ({\it ii}) \emph{nondegenerate}, i.e., the code has  encoding rate $k=0$. Enforcing these conditions allows us to define the corresponding Majorana stabilizer group
\be
\mathcal M = \langle \{i\gamma^I \gamma^J\}\rangle\;,~~~{(I,J) \in {[2N_Fn] \choose 2}}\,,
\ee
where the index $I=(j,a)$ combines position and flavor degrees of freedom and ${[2N_Fn] \choose 2}$ is a set of pairings of the $2N_Fn$ Majorana modes. 


To illustrate the construction, consider a  chain with $n$ sites, periodic boundary conditions,  and $N_F =1$, hence two Majorana modes per site,  which we denote as $\gamma_j^0$ and $\gamma_j^1$. We can then construct a stabilizer group $\mc M_\mathrm{Kitaev}$ as $\mathcal M_\mathrm{Kitaev} = \langle \{i\gamma^1_j \gamma^0_{j+1}\}_{j=1}^n\rangle \subseteq \mathrm{Maj}^+(2n)$. It is straightforward to check that this choice obeys all the conditions of a Majorana stabilizer group. The corresponding parent Hamiltonian is simply the standard Majorana chain, also known as the Kitaev chain \cite{kitaev2001unpaired}.

We are mainly interested in the case $N_F=2$ on two-dimensional lattices, which can be connected with the parton mean-field theory for a $\mathbb Z_2$ QSL. In this case, we   define two complex fermions at each site by taking linear combinations of the Majorana modes:\begin{eqnarray}
f_{j\uparrow}=\frac12(\gamma_j^0+i\gamma_j^3),\nonumber\\
f_{j\downarrow}=\frac12(\gamma_j^2-i\gamma_j^1).\label{ffermions}
\end{eqnarray}
The single-occupancy constraint in terms of $f$ fermions is equivalent to a parity constraint for Majorana fermions \cite{kitaev2006anyons,burnell2011su2}:\begin{equation}
D_j=\gamma^0_j \gamma^1_j \gamma^2_j \gamma^3_j = 1\qquad \forall j. \label{parityconstr}
\end{equation}
Within the subspace of states that respect this local constraint,  the physical Pauli operators are represented as\begin{equation}
\boldsymbol\sigma_j=(i \gamma^1_j \gamma^0_j,\,i \gamma^2_j \gamma^0_j,\,i \gamma^0_j \gamma^3_j). \label{eq:kitaevpartons}
\end{equation}
Given a state in the Majorana stabilizer code, we   obtain a spin wave function by applying a   Gutzwiller projector analogous to  Eq. \eqref{eq:gutzwillerproj}:\be
P_G |\psi\rangle = \prod_j \left(\frac{1+D_j}{2}\right)|\psi\rangle\;.
\ee

As already shown in Ref.  \cite{bravyi2010majorana}, this  relation  between Majorana fermions and Pauli operators  implies   that every $[[n,k,d]]$ qubit stabilizer code can be ``fermionized" into a $[[4n,k, 2d]]_F$ Majorana stabilizer code. Given  a qubit stabilizer group $\mathcal S$, we construct a Majorana stabilizer group by: ({\it i}) substituting every Pauli by the parton formula  in Eq. (\ref{eq:kitaevpartons}); ({\it ii}) adding $D_j = \gamma^0_j \gamma^1_j \gamma^2_j \gamma^3_j$ as a generator for every site, building the group $\mathcal M_\mathcal S$. It can be verified that all $4n$ modes are locked into a code with the same rate, while  the distance is doubled because each Pauli operator is a Majorana bilinear.

\section{Topological properties of projected Majorana fermion codes\label{sec:projectiondatum}}

While it is remarkable that a qubit stabilizer group can be fermionized, our focus is on the reverse process. Specifically, given a free-fermion, nondegenerate Majorana stabilizer group, we investigate  the properties of  the corresponding qubit stabilizer code. Similar ideas were discussed in Refs. \cite{Hassler2012,Vijay2015,Litinski2018,you2019higher,lensky2023graph,hastings2023quantum}, where it was proposed that the local constraint  in Eq. (\ref{parityconstr}) can be dynamically implemented  as an interaction term in the Hamiltonian. Physically, this interaction can arise from the charging energy in Majorana-Cooper boxes, which are experimentally feasible platforms for Majorana surface codes \cite{Landau2016}. Rather than discussing a specific Hamiltonian, here we  focus on the Gutzwiller projected state and aim to establish the precise  conditions for it  to be topologically nontrivial.

We start by   imposing that $\mathcal M$   be {parity-even}, meaning that $\prod_j D_j \in \mathcal M$. Otherwise, the Gutzwiller projection would trivially vanish if $\prod_j D_j =-1$. Based on the  discussion in Sec. \ref{MFC}, we  concentrate  on subgroups of $\mathrm{Maj}(4n) = \langle i, \{\gamma^0_j, \gamma^1_j, \gamma^2_j, \gamma^3_j\}_{j=1}^n \rangle $. We then have  the following lemma:

\begin{lemma}
 Let $\mathcal M \subseteq \mathrm{Maj}^+(4n)$ be a nondegenerate, free-fermion, parity-even Majorana stabilizer group with code \{$|\psi_\mathcal M\rangle$\}. Then, there exists a qubit stabilizer code $P_G\mathcal M \subseteq \mathcal P_n$, stabilizing the state $|P\psi_\mathcal M\rangle$, such that, given an operator  $m \in \mathrm{Maj}(4n)$ that commutes with all local parities $D_j$, the code state obeys 
 \begin{equation}
     \frac{\bra{\psi_\mathcal M}P_G m P_G \ket{\psi_\mathcal M}}{\bra{\psi_\mathcal M}P_G \ket{\psi_\mathcal M}} = \bra{P \psi_\mathcal M} \mathrm{P}(m)\ket{P \psi_\mathcal M}\;,
 \end{equation}
where $\mathrm P(m) \in \mathcal P_n$ is the corresponding Pauli operator according to the parton map   in Eq. (\ref{eq:kitaevpartons}).
 \label{Lemma: 1}
 \end{lemma}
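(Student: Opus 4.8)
The plan is to exploit that the Gutzwiller projector $P_G=\prod_j\tfrac{1+D_j}{2}$ is the orthogonal projection onto the physical subspace $V=\{\ket{\psi}\in\mathcal F_{2n}:D_j\ket{\psi}=\ket{\psi}\ \forall j\}$, on which the parton dictionary of Eq.~(\ref{eq:kitaevpartons}) becomes an honest isomorphism. First I would record the elementary facts $D_j^2=I$ and $[D_j,D_{j'}]=0$, so that $P_G$ is a genuine (Hermitian) projector, and fix a \emph{unitary} identification $\phi:V\xrightarrow{\sim}(\mathbb C^2)^{\otimes n}$ of the physical subspace with the qubit space. The projected code state is then defined as the normalized image $\ket{P\psi_\mathcal M}=\phi\big(P_G\ket{\psi_\mathcal M}\big)/\lVert P_G\ket{\psi_\mathcal M}\rVert$, which is nonzero because the parity-even hypothesis $\prod_j D_j\in\mathcal M$ rules out the global obstruction $\prod_j D_j=-1$ noted above.

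The central observation is a characterization of the operators that survive the projection. Since $D_j$ is the product of all four Majoranas on site $j$, a single $\gamma_j^a$ anticommutes with the other three factors and commutes with itself, hence anticommutes with $D_j$; as $D_j$ is even it commutes with everything on other sites. Therefore an operator $m\in\mathrm{Maj}(4n)$ commutes with every $D_j$ if and only if it contains an \emph{even} number of Majoranas on each site. Such ``site-even'' operators factorize as $m=\prod_j m_j$ with mutually commuting single-site factors $m_j$ lying in the eight-dimensional even subalgebra generated by $I$, the six bilinears $i\gamma_j^a\gamma_j^b$, and $D_j$. I would then show that restriction to $V$ maps this subalgebra onto the single-qubit Pauli algebra: imposing $D_j=I$ one checks that the six bilinears collapse to $\pm X_j,\pm Y_j,\pm Z_j$ in agreement with Eq.~(\ref{eq:kitaevpartons}) (e.g.\ $i\gamma_j^2\gamma_j^3=X_j$ on $V$) and that $D_j\mapsto I$. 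Because the $m_j$ commute across sites and map to single-site Paulis, the substitution rule $\mathrm P$ extends multiplicatively to a homomorphism on site-even operators satisfying $\phi\,(m|_{V})\,\phi^{-1}=\mathrm P(m)$, and Hermiticity of the stabilizers $i\gamma^I\gamma^J$ guarantees that their images are genuine (Hermitian) Pauli strings.

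With this dictionary the rest is short. For the projected code I would take $P_G\mathcal M:=\mathrm P(\mathcal M_0)$, where $\mathcal M_0=\{m\in\mathcal M:[m,D_j]=0\ \forall j\}$ is the site-even subgroup of $\mathcal M$; since $\mathrm P$ is a homomorphism and $\mathcal M$ is abelian with $-I\notin\mathcal M$, the image is an abelian subgroup of $\mathcal P_n$ free of $-I$, i.e.\ a qubit stabilizer group. It stabilizes $\ket{P\psi_\mathcal M}$ because every $m\in\mathcal M_0$ obeys $m\ket{\psi_\mathcal M}=\ket{\psi_\mathcal M}$ and $[m,P_G]=0$, so that $\mathrm P(m)\ket{P\psi_\mathcal M}=\ket{P\psi_\mathcal M}$. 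Finally, for an arbitrary $m$ commuting with all $D_j$ I would set $\ket{\chi}=P_G\ket{\psi_\mathcal M}$ and use $m|_{V}=\phi^{-1}\mathrm P(m)\,\phi$ together with $P_G^2=P_G=P_G^\dagger$ and the isometry of $\phi$ to get $\bra{\psi_\mathcal M}P_G m P_G\ket{\psi_\mathcal M}=\langle\chi|m|\chi\rangle=\lVert\chi\rVert^2\,\bra{P\psi_\mathcal M}\mathrm P(m)\ket{P\psi_\mathcal M}$ and $\bra{\psi_\mathcal M}P_G\ket{\psi_\mathcal M}=\lVert\chi\rVert^2$, whose ratio is the asserted identity,
\begin{equation}
\frac{\bra{\psi_\mathcal M}P_G\,m\,P_G\ket{\psi_\mathcal M}}{\bra{\psi_\mathcal M}P_G\ket{\psi_\mathcal M}}=\bra{P\psi_\mathcal M}\mathrm P(m)\ket{P\psi_\mathcal M}.
\end{equation}

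I expect the main obstacle to be the second step: verifying that the substitution rule of Eq.~(\ref{eq:kitaevpartons}), extended multiplicatively, truly coincides with the operator-algebra restriction to $V$ rather than merely agreeing on the three listed bilinears. The delicate point is the redundancy on $V$—where $D_j=I$ forces relations such as $i\gamma_j^2\gamma_j^3=X_j$ among operators not appearing in Eq.~(\ref{eq:kitaevpartons})—which must be tracked with the correct signs and phases so that $\mathrm P$ is consistently and Hermiticity-preservingly defined on all of $\mathcal M_0$. Restricting to the site-even sector is precisely what eliminates the cross-site anticommutation signs that would otherwise obstruct the homomorphism property, so that identifying the operators commuting with every $D_j$ as site-even is the key structural input.
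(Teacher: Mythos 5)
Your argument is correct in substance but proceeds by a genuinely different mechanism than the paper's. You build a unitary identification $\phi: V \to (\mathbb C^2)^{\otimes n}$ of the constrained subspace with the qubit space and establish the operator-level intertwining relation $\phi\,(m|_V)\,\phi^{-1} = \mathrm P(m)$ for all site-even $m$; the expectation-value identity then drops out as a change of basis, using only $P_G^\dagger = P_G = P_G^2$. The paper instead works purely group-theoretically: it forms $\mathcal M_{\mathcal D} = \langle \mathrm C_{\mathcal M}(\mathcal D), \mathcal D\rangle$, counts generators via the cycle formula $|E|-|V|+1 = n+1$ to conclude that both $\mathcal M_{\mathcal D}$ and its image $P\mathcal M$ are $k=0$ codes, and then matches the two sides of the identity by showing each equals the indicator function $\mathbf 1(m \in \mathcal M_{\mathcal D})$, using the preservation of commutation ``charges'' under $\mathrm P$. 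Your route proves the stronger statement that $\mathrm P$ is literally the restriction of the Majorana algebra to the physical subspace (so the identity holds for any state, not just the code state), and it isolates cleanly why site-evenness is the right hypothesis; the paper's route buys the explicit generator counting and the uniqueness of the stabilized state, which are reused in Lemma 2 and the Theorem. One point to tighten: you justify $P_G\ket{\psi_{\mathcal M}} \neq 0$ solely from $\prod_j D_j \in \mathcal M$, but since $\braket{\psi_{\mathcal M}|P_G|\psi_{\mathcal M}} = 2^{-n}\sum_{S}\braket{\prod_{j\in S}D_j}$, non-vanishing actually requires that $-\prod_{j\in S}D_j \notin \mathcal M$ for \emph{every} subset $S$, not only $S$ equal to all sites; in the paper this is the content of the assertion $\mathcal M \cap \mathcal D = \{1,\prod_i D_i\}$, which relies on the dimer graph being connected. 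Your conclusion that $-I \notin \mathrm P(\mathcal M_0)$ also ultimately rests on this non-vanishing, not merely on $-I\notin\mathcal M$, so the step deserves an explicit argument.
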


\begin{figure}
    \centering
    \includegraphics[width=0.95\linewidth]{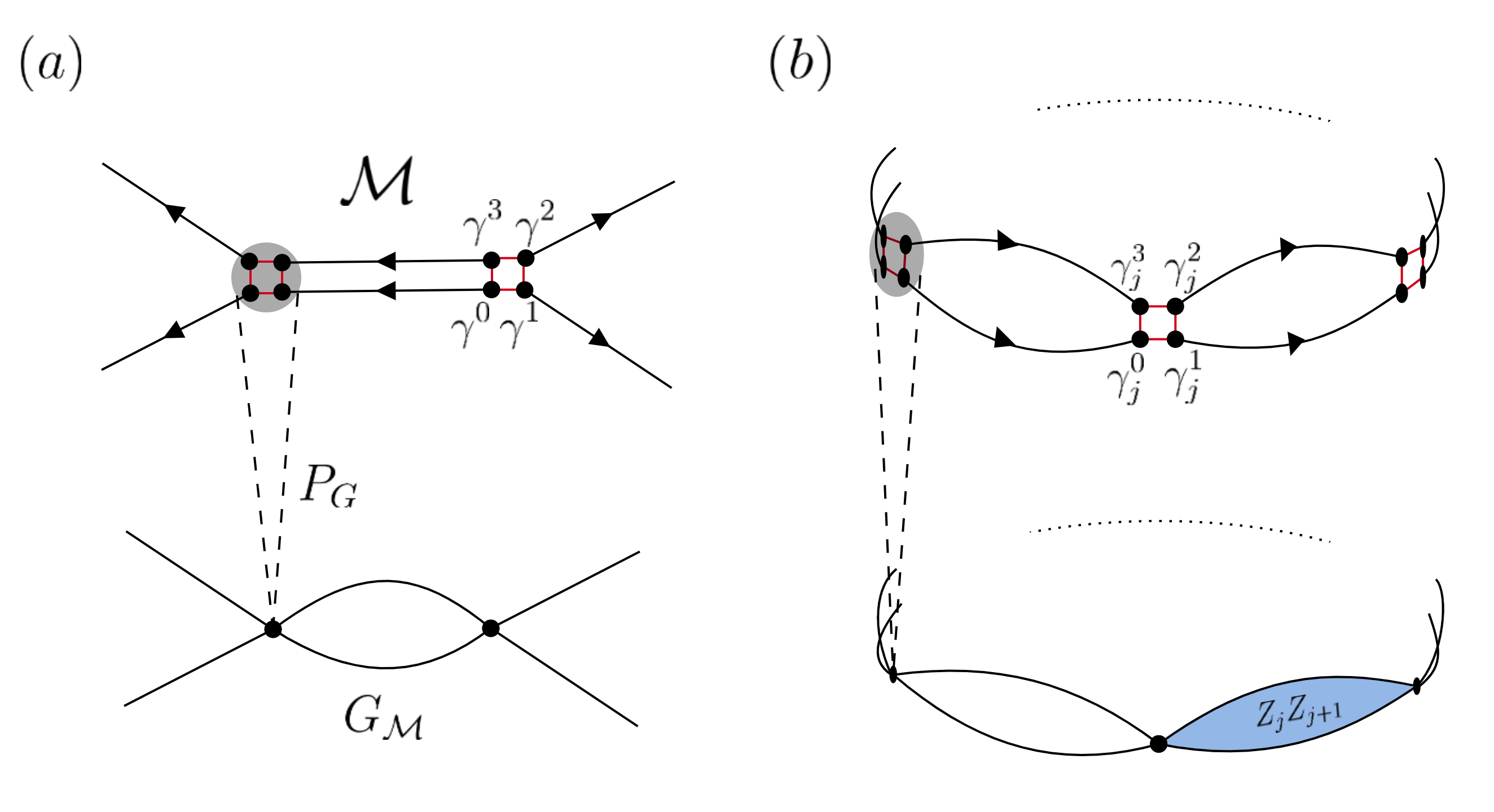}
    \caption{(a): Depiction of the Gutzwiller projection. The top image illustrates the oriented dimers representing the Majorana bilinears in $\mathcal M$, while the bottom image depicts the resulting projected graph $G_\mathcal M$. (b) Representation of the stacking of the two Majorana chains being projected into the Ising stabilizer code. The resulting projected state corresponds to the symmetric  GHZ state, locally stabilized by $Z_j Z_{j+1}$, as highlighted in the lower half of the figure. }
    \label{fig:projection}
\end{figure}

The proof is presented in Appendix \ref{app:proof}.   Lemma \ref{Lemma: 1}  establishes   the necessary conditions to properly bosonize a free fermion Majorana code into a qubit stabilizer code. The procedure is illustrated in Fig. \ref{fig:projection}(a). We consider  free-fermion Majorana stabilizer codes with four Majorana modes per site. Each generator $i\gamma^I\gamma^J\in \text{gen}(\mathcal M)$ is represented by a dimer with an arrow pointing from $I$ to $J$. After the projection, the corresponding qubit stabilizer code lives on the lattice formed by ``merging'' the Majorana modes of a site to form a qubit. Note that the operator $m$  in the expectation value $\langle \psi_\mathcal M| P_G m P_G |\psi_\mathcal M\rangle$ can also correspond to products of Majorana bilinears at different positions. Thus, the lemma includes the statement that correlation functions in the spin state, given by $\langle P\psi_\mathcal M|\mathrm P(m) |P\psi_\mathcal M\rangle$, are exactly reproduced by the calculation of the corresponding correlation on the fermionic side in the parton description.

The simplest example is obtained by considering a code in which  all Majorana  modes are paired within  the same site: $\mathcal M_\mathrm{trivial} = \langle \{i \gamma^0_j \gamma^3_j , \; i\gamma^2_j \gamma^1_j\}_{j=1}^n\rangle$. It is straightforward to  verify  that $D =\prod_j D_j \in \mathcal M_{\mathrm{trivial}}$ and $\mathcal M_\mathrm{trivial}$ satisfies  the conditions of Lemma \ref{Lemma: 1}. Since the generators of $\mathcal M_\mathrm{trivial}$ commute with all the local parities,  we can  apply Eq. (\ref{eq:kitaevpartons}) directly to obtain  $P_G \mathcal M_\mathrm{trivial} = \langle \{Z_1, \cdots, Z_n\}\rangle$. As a result, the stabilizer state is  the product state $|0\rangle^{\otimes n}$. It is easy to see that every Majorana code in which  all Majorana modes are paired within the same site  stabilizes a product state.

In a less trivial example,   consider  the   fermionic code defined from  the direct sum of two copies of the Kitaev chain,  $\mathcal M_\mathrm{Kitaev}^{\oplus 2} \equiv \langle \{i \gamma^1_{j}\gamma^0_{j+1}\}_j\rangle \oplus \langle \{i \gamma^2_j \gamma^3_{j+1}\}_j\rangle = \langle \{i \gamma^1_{j}\gamma^0_{j+1}, i \gamma^2_{j}\gamma^3_{j+1}\}_{j=1}^n\rangle$.  This example is illustrated in Fig. \ref{fig:projection}(b).  Since the   generators  contain only one Majorana per site,  they do not commute with the local parities, but the product of two generators on the same bond does. Applying the parton mapping in Eq. (\ref{eq:kitaevpartons}), we obtain  $\text{P}(i \gamma^1_j \gamma^0_{j+1}\,i \gamma^2_j \gamma^3_{j+1}) = Z_{j}Z_{j+1}$. Moreover,  the fermion  parities  of each Majorana chain,   $\prod_j i \gamma^1_j \gamma^0_{j+1}$ and $ \prod_j i \gamma^2_j \gamma^3_{j+1}$,  are  mapped onto the global $\mathbb{Z}_2$ symmetry operator  $\prod_j X_j$ of the Ising  chain. The corresponding projected group is $\mathcal S_\mathrm{GHZ} \equiv P_G \mathcal M_\mathrm{Kitaev}^{\oplus 2} = \langle \{Z_j Z_{j+1}\}_j \cup  \{\prod_j X_j\}\rangle$ which stabilizes the $\mathbb{Z}_2$-symmetric GHZ  state $|\psi\rangle = (|0\rangle^{\otimes n} + |1\rangle^{\otimes n})/\sqrt 2$. Here we stress that something quite nontrivial happened in this procedure. The ground state of the Kitaev chain is an example of a short-range entangled state, which means that it can be connected to a (trivial)  product  state under the action of a finite-depth quantum circuit  \cite{Chen2010}. However, the  GHZ state  exhibits long-range entanglement.  This  simple  example illustrates the power of the projection in  generating nontrivial states from trivial ones. 

Up to this point, there is no notion of geometrical locality on the codes, meaning that the Majorana modes can be paired arbitrarily without any restraint on the possible geometries in which they live. In fact, we can exploit this arbitrariness to explore higher-dimensional codes. To achieve this, the key condition we have to impose on the Majorana dimers is that they tessellate the surface of interest. We now discuss how one can achieve that explicitly.

Given a free-fermion, nondegenerate,  parity-even Majorana stabilizer group $\mathcal M$, we want to construct the multigraph $G_\mathcal M = (V,E)$   where the projected qubit stabilizer code is defined, as  illustrated in the lower panel of  Fig \ref{fig:projection}(a). We identify the  vertices  of $G_\mathcal M$    with the sites, $V=\{1,2,\cdots, n\}$, and  associate edges $E$ with pairs of vertices   connected by Majorana dimers.  However, this condition  alone is not sufficient to uniquely specify  a given edge since there can be  more than one Majorana dimer connecting a pair of vertices. To remedy this, let us   consider the set of supports of $\mathcal M$, given by the set
\be
\mathrm{supp}(\mathcal M) = \{(i, j) \in V^2 \,|\, \exists a,b : i \gamma^a_i \gamma^b_j \in \mathrm{gen}(\mathcal M)\}\,,
\ee
with $i\neq j$ and $V^2=V\times V$. An edge $E$ can then be defined by the following procedure: Given $i\gamma^a_i \gamma^b_j \in \mathrm{gen}(\mathcal M)$, we define the element  $((i,j), (a,b)) \in E$. We denote the resulting projected graph by  $PG_\mathcal M$. Note that the condition   $i \neq j$ excludes loops within the same site in $PG_\mathcal M$, but the projected graph can have two  edges between  a pair of sites  if two elements of $E$ share the same support on $V$.

We say that $\mathcal M$ is \emph{two-dimensional} if it satisfies all the conditions of Lemma \ref{Lemma: 1} and there is an oriented Riemann surface $\Sigma$ with metric $g$ and an graph embedding $\sigma: G_\mathcal M \to \Sigma$ \footnote{This graph embedding must be \emph{cellular}, that is, the faces must be homeomorphic to open disks. This allows us to properly compute topological notions.} preserving the distances. We will omit both the embedding and the metric and refer to the necessary data as $(\mathcal M, \Sigma)$.

As a concrete example,  consider  a square lattice formed by Majorana dimers with periodic boundary conditions, as shown in Fig. \ref{fig:square}. The corresponding Majorana stabilizer group is  $\mathcal M_\mathrm{Sq} = \langle \{i\gamma^1_{\mathbf r}\gamma^3_{\mathbf r+ \hat{\mathbf x}},i\gamma^2_{\mathbf r}\gamma^0_{\mathbf r+ \hat{\mathbf y}} \}_\mathbf r\rangle$, where $\mathbf r$ are positions on the square lattice with primitive vectors $\hat{\mathbf x}$ and $\hat{\mathbf y}$.  The manifold is a 2-torus, given by $\Sigma = T^2$, endowed with the standard Euclidean metric. The number of sites is $n = L_x L_y$, where $L_x$ and $L_y$ measure the lengths (in units of the lattice spacing) along $x$ and $y$ directions, respectively. We assume that $L_x$ and $L_y$ are both even. In this case, all Majorana modes are paired into dimers and $\mathcal M_\mathrm{Sq} $  satisfies the parity-even constraint.

\begin{figure}
    \centering
    \includegraphics[width=0.9\linewidth]{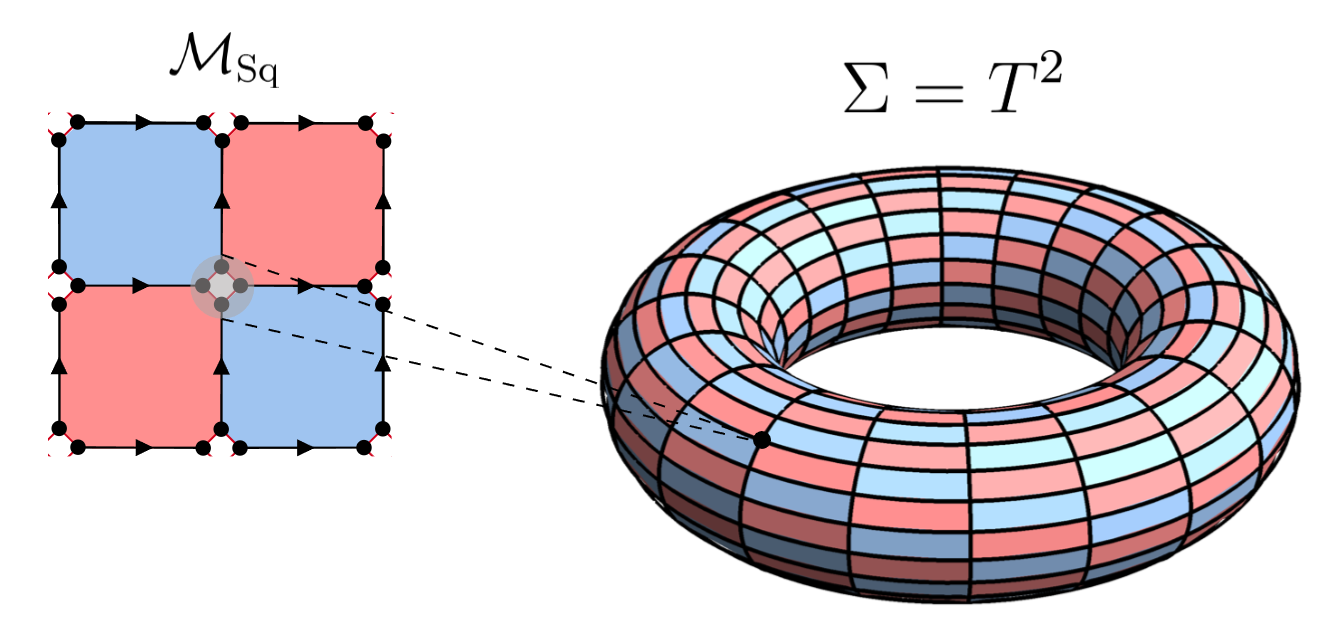}
    \caption{Illustration of the projection of $\mathcal M_\mathrm{Sq}$. We impose  an embedding of the graph  $G_\mathcal M$ into a   surface $\Sigma$,  reflecting the   geometrically local bonds. }
    \label{fig:square}
\end{figure}

To obtain the generators of the projected group $P_G \mathcal M_{\mathrm{Sq}}$, we have to take products of Majorana dimers that contain two modes per site. The minimal elements of $\mathcal M_\mathrm{Sq}$ that commute with the local parities are given by combinations of dimers forming a plaquette, which have the  form $(i\gamma^1_{i} \gamma^3_ {j})(i\gamma^2_ {j} \gamma^0_ {k})(i \gamma^1_{l} \gamma^3_{k})(i\gamma^2_ i\gamma^0_ l)$, where $( i  j  k  l)$  labels the four vertices in a plaquette corresponding to the positions $(\mb r,\mb r+\hat {\mb x},\mb r+\hat{\mb x}+\hat{\mb y},\mb r+\hat{\mb y})$; see  Fig. \ref{fig:square}. Under the parton map, this combination leads to the qubit operator $Z_ i X_ jZ_ k X_ l$. There is another set of nontrivial generators constructed by taking products of  dimers around one of the essential cycles of the 2-torus, given by $\prod_{\mathbf r \in h_1} i \gamma^2_\mathbf r \gamma^0_{\mathbf r+ \hat{\mathbf y}} $ and $\prod_{\mathbf r \in h_2} i \gamma^1_\mathbf r \gamma^3_{\mathbf r+ \hat{\mathbf x}} $, where $h_1$ and $h_2$ are the vertical and horizontal essential cycles, respectively. After the projection, they become the string operators  $  \prod_{\mathbf r \in h_{1}(h_2)}Y_\mathbf r$.

We then have that the projected group is $P_G\mathcal M_\mathrm{Sq} = \langle \{Z_  i X_  jZ_  k X_  l,\prod_{\mathbf r \in h_{1}}Y_\mathbf r,\prod_{\mathbf r \in h_{2}}Y_\mathbf r\}_{(  i   j   k   l), h_1, h_2}\rangle$. Keeping  only the local elements (plaquettes) in the generators group,   we obtain the parent Hamiltonian
\be
H_\mathrm{Wen} \equiv  - \sum_{(  i   j   k   l)} Z_  i X_  j Z_  k X_  l\;,
\label{eq:wenhamiltonian}
\ee
known as the Wen plaquette model \cite{wen2003quantum}. The latter is equivalent to the toric code Hamiltonian  under a Hadamard transformation on one of the sublattices \cite{Nussinov2009}.  Its ground state exhibits $\mathbb Z_2$ topological order and is fourfold degenerate on the torus. However, the additional string operators in  $P_G\mathcal M_\mathrm{Sq}$ select  one of the ground states, which obeys $\prod_{\mathbf r \in h} Y_\mathbf r |\psi\rangle = +|\psi\rangle$
and corresponds to the projection of the fermionic state  $|\psi_{\mathcal M_\mathrm{Sq}}\rangle$.

Note that the generators of the projected group come from \emph{cycles} of the projected graph $PG_\mathcal M$, since the projection enforces an even number of Majoranas at each site. We can distinguish between contractible cycles, defined by local products of Majorana operators acting on an $O(1)$ support, and essential cycles, which wind around the torus and involve an $O(\sqrt n)$ support.  This observation can be formalized as a Lemma:

\begin{lemma}
 Let $(\mathcal M, \Sigma)$ be a two-dimensional Majorana group. Then, it can be decomposed as the union $P_G\mathcal M = P_\partial \mathcal M \cup P_h \mathcal M$, where $P_\partial \mathcal M$ is a Majorana stabilizer group with support on boundaries of faces of $\sigma$, and $P_h\mathcal M$ is supported on essential cycles.
\label{Lemma: 2}
\end{lemma}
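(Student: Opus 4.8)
The plan is to recast the entire statement in the language of $\mathbb{Z}_2$ homology of the cell complex determined by the embedding $\sigma\colon G_\mathcal{M}\to\Sigma$, and then invoke the standard splitting of the cycle space into boundaries and homology representatives. First I would set up the dictionary between elements of $\mathcal{M}$ and edge subsets of $G_\mathcal{M}$. Since $\mathcal{M}$ is Abelian and each free-fermion generator $i\gamma^a_i\gamma^b_j$ squares to the identity, $\mathcal{M}\cong\mathbb{Z}_2^{|\mathrm{gen}(\mathcal{M})|}$ and every element is, up to phase, the product of the dimers in some edge subset $S\subseteq E$. The decisive computation is that each single Majorana $\gamma^a_j$ anticommutes with the local parity $D_j=\gamma^0_j\gamma^1_j\gamma^2_j\gamma^3_j$, so the operator attached to $S$ commutes with every $D_j$ if and only if each vertex is incident to an even number of half-edges of $S$, i.e.\ iff $S$ is an even subgraph. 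Hence the subgroup $\mathcal{M}_0\subseteq\mathcal{M}$ of elements commuting with all local parities is canonically identified with the $\mathbb{Z}_2$ cycle space $Z_1(G_\mathcal{M};\mathbb{Z}_2)=\ker\partial_1$, and by Lemma \ref{Lemma: 1} its image $\mathrm{P}(\mathcal{M}_0)$ is exactly the generating set of $P_G\mathcal{M}$.

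Next I would decompose this cycle space using the cellular structure. Because $\sigma$ is cellular, $\Sigma$ is a CW complex whose $0$-, $1$-, and $2$-cells are the vertices, edges, and faces, so the boundary map $\partial_2\colon C_2\to C_1$ is defined and its image $B_1=\mathrm{im}\,\partial_2$ is generated by the boundary cycles of the individual faces (the plaquettes). Cellular homology then gives $Z_1/B_1=H_1(\Sigma;\mathbb{Z}_2)$, of dimension $2g$ for an orientable genus-$g$ surface; a dimension count through Euler's formula $V-E+F=2-2g$ confirms $\dim Z_1-\dim B_1=(E-V+1)-(F-1)=2g$. Choosing $2g$ cycles representing a basis of $H_1$ — which necessarily wind around the essential (noncontractible) loops of $\Sigma$ — I obtain the splitting $Z_1=\langle\text{face boundaries}\rangle+\langle\text{essential cycles}\rangle$.

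Finally I would push this splitting through the parton map $\mathrm{P}$. Setting $P_\partial\mathcal{M}=\langle\,\mathrm{P}(b):b\ \text{a face boundary}\,\rangle$ and $P_h\mathcal{M}=\langle\,\mathrm{P}(c):c\ \text{an essential-cycle representative}\,\rangle$, the fact that $\mathrm{P}$ is a homomorphism on $\mathcal{M}_0$ preserving commutation relations (Lemma \ref{Lemma: 1}) makes both of these genuine stabilizer groups, and since their preimages together span $Z_1$ we recover $P_G\mathcal{M}=P_\partial\mathcal{M}\cup P_h\mathcal{M}$ at the level of generating sets. The support statements are then immediate: a face boundary of a cellular embedding is a contractible loop of $O(1)$ edges, whereas each essential-cycle representative must wrap around $\Sigma$ and hence carries $O(\sqrt{n})$ support, matching the plaquette-versus-winding dichotomy seen in $P_G\mathcal{M}_\mathrm{Sq}$.

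I expect the main obstacle to be making Step~1 airtight: establishing the equivalence ``commutes with all $D_j$ if and only if $S$ is an even subgraph'' together with the bookkeeping of the $i$ phases, so that each cycle maps to a legitimate $(+1)$-eigenvalue element of $\mathcal{P}_n$ rather than to $-I$ or to $\pm i$ times a Pauli string. The parity-even hypothesis on $\mathcal{M}$ is precisely what eliminates the offending global sign here. The homological splitting in Steps~2–3, although it is the conceptual heart of the lemma, is then a direct application of the cellular-embedding hypothesis emphasized in the footnote, with the multigraph case (two dimers between the same pair of sites) handled automatically by treating distinct dimers as distinct $1$-cells.
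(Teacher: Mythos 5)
Your proposal is correct and follows essentially the same route as the paper: both identify the subgroup of $\mathcal M$ commuting with all local parities with the $\mathbb Z_2$ cycle space $Z_1(X_\sigma)$ of the embedded graph (the paper's Proposition~1, proved exactly by your ``even subgraph'' argument) and then split the cycle space into face boundaries $B_1$ and essential cycles before pushing through the parton map $\mathrm P$. The only cosmetic difference is that the paper defines $P_h\mathcal M$ as the image of the set-theoretic complement $Z_1\setminus B_1$ (so the stated union is literal and $P_h\mathcal M$ is explicitly not closed under multiplication), whereas you generate it from $2g$ homology-basis representatives and read the decomposition at the level of generating sets.
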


The proof of the above Lemma can be found in appendix \ref{prooflemma2}. In the square lattice example, $P_\partial \mathcal M$  corresponds to the stabilizer group of the Wen plaquette model, $P_\partial \mathcal M_\mathrm{Sq}  = \langle \{Z_  i X_  j Z_  k X_  l\}_{(  i   j   k   l)}\rangle$. Now, we claim that the embedding is \emph{two-colorable}, meaning that each plaquette (face) in the lattice can be labeled by a color index, say red (R) or blue (B), such that two plaquettes with the same color never share a common edge. This means that the generators split into two groups, according to their color. In the following, we refer to a two-colorable, two-dimensional Majorana group with a two-colorable embedding:

\begin{lemma}
    
Let $(\mathcal M, \Sigma)$ be a two-colorable, two-dimensional Majorana group. Then, the boundary Majorana group factorizes into the product $P_\partial \mathcal M \cong  P_R \mathcal M \times P_B \mathcal M$, where the generators of $P_{R/B}\mathcal M$ are supported on red and blue faces, respectively.
\label{lemma3}

\end{lemma}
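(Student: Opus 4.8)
The plan is to prove the claimed factorization by establishing it as an internal direct product: I will show that $P_R\mathcal M$ and $P_B\mathcal M$ are commuting subgroups whose product is all of $P_\partial\mathcal M$ and whose intersection is trivial, which for an abelian stabilizer group is exactly the statement $P_\partial\mathcal M\cong P_R\mathcal M\times P_B\mathcal M$. By Lemma \ref{Lemma: 2}, $P_\partial\mathcal M$ is generated by the face-boundary (plaquette) operators, and two-colorability partitions the faces into a red set and a blue set. Hence $P_\partial\mathcal M=P_R\mathcal M\cdot P_B\mathcal M$ is immediate from the generator partition, and commutativity is automatic since $P_\partial\mathcal M$ is abelian. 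The entire content therefore reduces to proving $P_R\mathcal M\cap P_B\mathcal M=\{I\}$.

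To control the intersection I would pass back to the Majorana picture, lifting each plaquette Pauli to the product of oriented dimers around its face boundary. The key geometric input from two-colorability is that every edge of $G_\mathcal M$ borders exactly one red and one blue face; consequently the Majorana lift of a product over a red face-set $C_R$ has edge support equal to the union of the edges of those faces, with no internal cancellation (two red faces share no edge), and likewise for blue. I would first record the two basic within-color relations: the product of \emph{all} red plaquettes equals the product of all dimers (each edge occurring once), which uses every Majorana mode exactly once and hence equals $\pm\prod_j D_j=\pm D$; since $\mathcal M$ is parity-even and contains no $-I$, this projects to the identity, so $\prod_{\text{all red}}\mathrm P=\prod_{\text{all blue}}\mathrm P=I$.

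Next, for $g\in P_R\mathcal M\cap P_B\mathcal M$ I write $g=\prod_{C_R}\mathrm P=\prod_{C_B}\mathrm P$, so the Majorana lift of $\prod_{C_R}\prod_{C_B}$ projects to the identity. Here I invoke the faithfulness underlying Lemma \ref{Lemma: 1}: an even Majorana operator commuting with all $D_j$ projects to $I$ iff it equals, up to a phase, a product $\prod_{j\in S}D_j$ of local parities. Because each Majorana mode lies in a unique dimer, matching such a product forces every site to carry either all four or none of its incident dimers; an edge with one endpoint in $S$ then drags the other endpoint into $S$, so by connectedness of the cellular $1$-skeleton $S=\varnothing$ or $S=V$. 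Writing $\mathrm{rf}(e),\mathrm{bf}(e)$ for the red and blue faces bordering an edge $e$, each case fixes the edge-parity $[\,\mathrm{rf}(e)\in C_R\,]+[\,\mathrm{bf}(e)\in C_B\,]$ to a constant; propagating this across shared edges and using connectedness forces $C_R$ and $C_B$ each to be empty or full, and in every resulting case $g$ collapses to $I$ or to $\prod_{\text{all red}}\mathrm P=I$. This yields the trivial intersection and completes the factorization.

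The main obstacle will be the kernel characterization used in the third paragraph—pinning down precisely that a product of dimers commuting with the $D_j$ projects to the identity \emph{exactly} when it is a phase times a product of local parities, and tracking the phase to guarantee it is $+I$ rather than $-I$ (using $-I\notin\mathcal M$). Once this is secured, the connectedness and propagation arguments are routine $\mathbb F_2$ bookkeeping on the face two-coloring, so the crux is the clean identification of the projection kernel together with the careful sign accounting inherited from the Majorana-to-Pauli map.
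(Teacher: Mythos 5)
Your proof is correct, and it is substantially more careful than the paper's own argument, which disposes of Lemma \ref{lemma3} in two sentences: the generators partition by color, so every element factors as $m_Rm_B$, and abelianness is invoked to conclude $P_\partial\mathcal M\cong P_R\mathcal M\times P_B\mathcal M$. The paper never explicitly verifies that this factorization is unique, i.e.\ that $P_R\mathcal M\cap P_B\mathcal M=\{I\}$ — which is exactly the point you correctly identify as the entire content of the statement, since without it one only gets a surjection from the direct product. In the paper that independence is really established only later, in Proposition 3 of Appendix \ref{proofT1}, where the anticommuting Pauli tuples $\mathbf P_R,\mathbf P_B$ are constructed so that at each vertex the red and blue sectors act through anticommuting single-site Paulis; trivial intersection then follows because an element of $P_R\mathcal M$ is a word in $\mathbf P_R$ alone and an element of $P_B\mathcal M$ a word in $\mathbf P_B$ alone. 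Your route is different and self-contained: you lift the putative intersection element to the Majorana picture, use the kernel characterization $\ker\mathrm P=\mathcal D$ already proved in Appendix \ref{app:proof} (so your flagged ``main obstacle'' is in fact already secured by Lemma \ref{Lemma: 1}), note that matching a product of local parities forces each site to carry all four or none of its incident dimers, and propagate by connectedness; your sign bookkeeping (product of all dimers equals $+\prod_jD_j$ because $-I\notin\mathcal M$ and $\mathcal M$ is parity-even) is also sound. What your approach buys is a proof of the lemma at the point where it is stated, without forward reference to the CSS machinery; what the paper's (implicit) approach buys is that the anticommuting-tuple structure is needed anyway for Theorem \ref{theorem1}, so the independence of the color sectors comes for free once Proposition 3 is in place. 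One small caveat: your within-color relation $\prod_{\text{all red}}\mathrm P=I$ and the connectedness of the face-adjacency graph both rest on the standing hypotheses (nondegenerate, all modes paired, cellular embedding in a connected surface); it would be worth stating explicitly that these are part of what ``two-dimensional Majorana group'' packages.
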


\begin{proof}
     Since every generator is supported on a boundary of a face, it follows that every element $m \in P_\partial \mathcal M$ can be written as $m=m_Rm_B$, where $m_{R/B} \in P_{R/B}\mathcal M$. Since $\mathcal M$ is Abelian, this means that it is isomorphic to $P_R\mathcal M \times P_B\mathcal M$, whose elements are the pairs $(m_R,m_B)$.
\end{proof}

The structure of qubit stabilizer codes tesselated on surfaces is severely restricted, as   previously noted in Ref.  \cite{anderson2013homological}, where they were referred to as ``homological stabilizer codes''. As Lemma \ref{lemma3} states, for two-colorable embeddings, the local stabilizer group factorizes even further into two single-colored groups. We dub this feature as the code having a \emph{CSS structure}, since they are reminiscent of  Calderbank-Sloane-Shor codes, stabilizer codes whose generators are  built with either $X$s or $Z$s \cite{calderbank1996good, steane1996multiple}, splitting as $\mathcal S \cong \mathcal S_X \times \mathcal S_Z$. CSS codes are particularly useful to analyze and bound parameters, since the building blocks are equivalent to \emph{classical} error-correcting codes, for which  a mature theory already exists \cite{macwilliams1977theory}.

At this point, we are able to enunciate our main result:

\begin{theorem}
 Let $(\mathcal M, \Sigma)$ be a two-colorable, two-dimensional Majorana group. Then, $P_\partial \mathcal M$ stabilizes a $[[n,k,d]]$ stabilizer code with:
\begin{itemize}
    \item Topological encoding: $k$ is a topological invariant;
    \item String-like symmetries: There are logical operators supported on closed loops;
    \item Anyon excitations: The action of local operators excite one of $\{1,e,m,f\}$, satisfying the toric code fusion rules \cite{kitaev2003fault}.
\end{itemize}
\label{theorem1}
\end{theorem}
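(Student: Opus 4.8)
The plan is to treat $P_\partial\mathcal M$ as a homological stabilizer code on the surface $\Sigma$, in the spirit of Ref.~\cite{anderson2013homological}, and to read off all three claims from cellular homology. The starting point is the CSS factorization $P_\partial\mathcal M\cong P_R\mathcal M\times P_B\mathcal M$ of Lemma~\ref{lemma3}. Using the two-coloring I would build an auxiliary cell complex $K$ on $\Sigma$ whose $1$-cells are the $n$ qubits (the sites of $G_\mathcal M$), whose $2$-cells are the red faces, and whose $0$-cells are the blue faces. This reorganization is legitimate because each degree-four site borders exactly two red and two blue faces, which is precisely the incidence data of an edge lying between two plaquettes and two vertices. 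After a suitable local Clifford change of basis adapted to the coloring (as in the Hadamard equivalence noted after Eq.~\eqref{eq:wenhamiltonian}) the red generators become the $Z$-type checks $\partial_2$ and the blue generators the $X$-type checks $\partial_1^{\mathsf T}$ of $K$; for the homology counting this pure form is a convenience rather than a necessity, since the $\mathbb F_2$ ranks are already fixed by the CSS factorization.

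\textbf{Topological encoding.} With this dictionary, the encoding rate is $k=n-\mathrm{rank}(P_R\mathcal M)-\mathrm{rank}(P_B\mathcal M)=\dim H_1(K;\mathbb F_2)$. I would show that the only relation among the red (resp. blue) generators is the single global product $\prod_{\mathrm{red}}=I$ (resp. $\prod_{\mathrm{blue}}=I$), which holds because on a closed surface every qubit lies in an even number of same-colored faces; these two relations are exactly the statements $\dim H_2(K)=\dim H_0(K)=1$. Hence $\mathrm{rank}(P_R\mathcal M)=F_R-1$ and $\mathrm{rank}(P_B\mathcal M)=F_B-1$, and since $\chi(\Sigma)=F_B-n+F_R$ one obtains $k=n-F_R-F_B+2=2-\chi(\Sigma)=2g$. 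Because $b_1=2g$ is a homeomorphism invariant of $\Sigma$, $k$ is a topological invariant, independent of the particular pairing and embedding. The crux — and what I expect to be the main obstacle — is proving that these two global products are the \emph{only} relations, i.e. that the algebraic stabilizer counting faithfully reproduces the cellular homology with no accidental relations or extra logical operators. This is precisely where cellularity (faces are disks) and orientability of $\Sigma$ enter; without them $K$ need not be a genuine cellulation and $\dim H_1$ could differ from $2g$.

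\textbf{Strings and anyons.} The logical operators are $C_{\mathcal P_n}(P_\partial\mathcal M)\backslash P_\partial\mathcal M$, which under the dictionary are the classes of $H_1(K;\mathbb F_2)$ and $H^1(K;\mathbb F_2)$; each nontrivial class is represented by a closed, non-contractible loop of Pauli operators, with the essential-cycle operators $P_h\mathcal M$ of Lemma~\ref{Lemma: 2} furnishing explicit representatives. This gives $2g$ independent string-like symmetries, matching $k$, and establishes the second claim. For the anyons, I would note that a local Pauli anticommutes only with the face stabilizers in its immediate neighborhood and so creates excitations in pairs; two-colorability sorts a violation into the red sector, labeled $e$, or the blue sector, labeled $m$, with $f=e\times m$ and the vacuum $1$. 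Since each stabilizer is an involution with eigenvalues $\pm1$, every excitation is its own inverse, so $e\times e=m\times m=f\times f=1$ and $e\times m=f$; these are exactly the fusion rules of $\{1,e,m,f\}$ for the toric code, giving the third claim. (The semionic mutual statistics of $e$ and $m$, if wanted, follows from the single anticommutation of the red and blue string operators where they cross, but the statement requires only the fusion rules.)
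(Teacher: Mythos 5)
Your proposal is correct and follows essentially the same route as the paper: your auxiliary complex $K$ (red faces as $2$-cells, the $n$ sites as $1$-cells, blue faces as $0$-cells) is precisely the chain complex $X(C_R,C_B)$ of Appendix \ref{proofT1}, and your Euler-characteristic argument $k=n-(F_R-1)-(F_B-1)=2-\chi(\Sigma)$, with the two global products as the only relations established via $\dim H_2=\dim H_0=1$ from orientability and connectedness, is exactly the paper's proof of topological encoding, as is the identification of logicals with $H_1/H^1$ classes and of anyons with red/blue face violations. The paper merely packages the same content through a general CSS-from-classical-codes proposition (which also yields the systolic distance formula) and supplements the fusion-rule argument with explicit Wilson-line operators and the $\theta(f)=-1$ self-statistics, neither of which is needed for the claims as stated.
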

Thus,  the properties of the projected state match the expectation for $\mathbb Z_2$ topological order. The detailed proof is presented in Appendix \ref{proofT1}.   

In Ref. \cite{wen2003quantum}, it was noted that the topological nature of the encoding rate $k$ of the Wen plaquette model depends on having  even lengths in both directions, i.e., $L_x$, $L_y \in 2\mathbb Z$. This can be rephrased in terms of the bicolorability condition, which is well-defined only for systems with even lengths. In fact, by analyzing the action of local operators in the code, we  can   verify that the color excitations have a one-to-one correspondence with the toric code anyons, meaning that $e$ anyons are hosted on, say, red faces, and $m$ anyons on blue ones. This reinforces the previous statement of the equivalence between the Wen plaquette model and the toric code model. 

\begin{figure}
    \centering
    \includegraphics[width=0.9\linewidth]{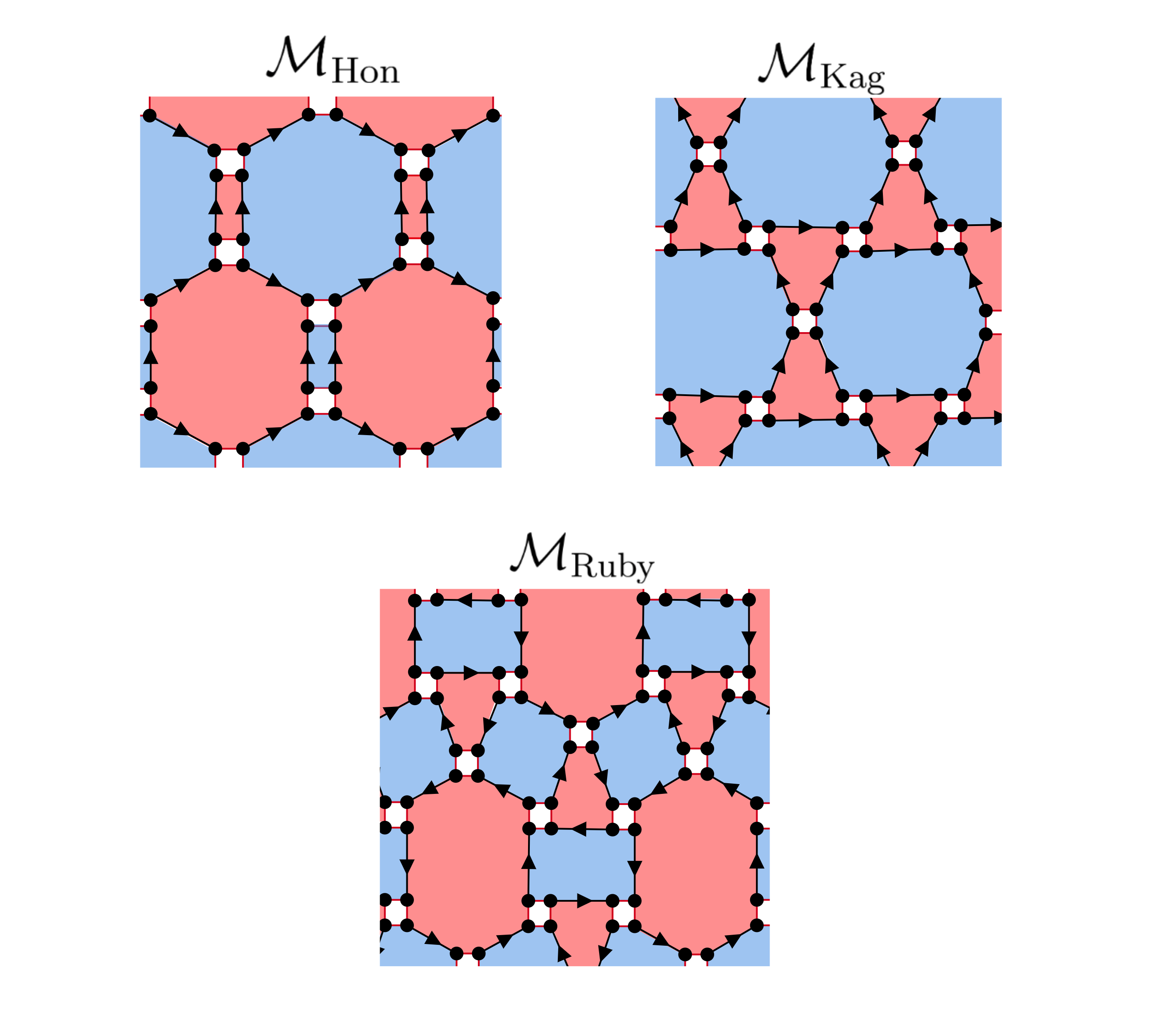}
    \caption{Majorana dimers on two-colorable graphs whose projections define  stabilizer codes on the honeycomb, kagome, and ruby lattices. In all these cases, the projected state has $\mathbb Z_2$ topological order.}
    \label{fig:otherexamples}
\end{figure}

While we have used the square lattice as our main example, the result of the theorem also applies to  other four-valent graphs. In Fig. \ref{fig:otherexamples}, we show examples corresponding to the honeycomb, kagome, and ruby lattices. The honeycomb example can  be derived from the highly anisotropic limit of Kitaev's honeycomb model \cite{kitaev2006anyons}. The kagome lattice example furnishes  a stabilizer code representative of the  $\mathbb Z_2$ topological order  discussed in Ref. \cite{verresen2021prediction}. We    note that the code distance depends on geometric aspects of the tesselation, as it scales with   the length of the nontrivial cycle of the surface $\Sigma$, as measured by the Riemannian metric $g$, usually referred to as the 1-systole \cite{quantum2021breuckman}. However,   all these examples can be understood as Euclidean tesselations and satisfy $d=O(\sqrt{n})$, in agreement with  the bound derived in Ref. \cite{delfosse2013tradeoffs}. The construction can even be applied to  hyperbolic codes with four-valent tesselations \cite{breuckmann2018phd} that satisfy the conditions of the theorem.

Another remark is that the toric/surface code topological order can host non-Abelian anyons by what is called twisting \cite{bombin2010topological}. This result was recently ``fermionized'' in Ref. \cite{lensky2023graph}, where the authors describe  the emergence of Ising anyons from Majorana models with $\mathbb Z_2$ gauge fields, in close relation to the setup studied here. This result  shows that, if some  Majorana  modes are left unpaired (i.e, a \emph{degenerate} Majorana fermion code in our language), the corresponding qubit code after the projection  can host Ising anyons.

\section{Topological phase transition in  projected fermionic states\label{sec:numericalprojection}}

In this section, we address the relation between free-fermion states and Gutzwiller projected states beyond the exactly solvable cases constructed from stabilizer groups.  First, we choose two free-fermion states $\ket{\psi_0}$ and $\ket{\psi_1}$, such that $  P_G \ket{\psi_0}$ is a $\mathbb{Z}_2$ QSL while $ {P}_G\ket{\psi_1}$ is topologically trivial. By showing that $\ket{\psi_0}$ and $\ket{\psi_1}$ are adiabatically connected, we provide an example in which  two fermionic states are in the same phase, but their Gutzwiller projections are not. Using MPS to compress the states along the adiabatic path, we are able to characterize the topological phase transition in the projected states by computing the entanglement entropy and spin correlation functions.

\subsection{Adiabatic paths between Majorana fermion codes}

Consider two fermionic states $\ket{\psi_0}$ and $\ket{\psi_1}$ which are ground states of free-fermion Hamiltonians $H_0$ and $H_1$, respectively. We say that $\ket{\psi_0}$ and $\ket{\psi_1}$ are in the same phase if there exists a parameterized Hamiltonian $H(s)$ such that  $H(0) = H_0$,   $H(1) = H_1$, and  $H(s)$ is gapped for all $s \in [0, 1]$ \cite{Hastings2005, Hastings2019}. In other words, there is an adiabatic path that connects the two states. Here we will  choose   $H_0$  and $H_1$ to be parent Hamiltonians for Majorana stabilizer codes on the square lattice.   

We start by defining  $H_0$ as
\begin{align}
H_0 & = - \sum_{\mb r} (i\gamma^1_{\mb r} \gamma^3_{\mb r+\hat {\mb x}} + i \gamma_{\mb r}^2 \gamma^0_{\mb r + \hat{\mb y}}).
\end{align}
We can rewrite this Hamiltonian using the complex spin-1/2 fermions defined in Eq. (\ref{ffermions}) as
\bea
H_0& =& -i\sum_{\mb r} (f^\dagger_{\mb r,\downarrow} f^{\phantom\dagger}_{\mb r + \hat {\mb y}, \uparrow} -f^\dagger_{\mb r,\downarrow} f^{\phantom\dagger}_{\mb r + \hat {\mb x}, \uparrow}  \nonumber\\
&&+f^{\phantom\dagger}_{\mb r,\downarrow} f^{\phantom\dagger}_{\mb r + \hat {\mb y}, \uparrow} +f^{\phantom\dagger}_{\mb r,\downarrow} f^{\phantom\dagger}_{\mb r + \hat {\mb x}, \uparrow}) + \mathrm{H.c.}, 
\eea
which has the form of the generic parton mean-field Hamiltonian in Eq. (\ref{eq:freefermionZ2}).  The spectrum of $H_0$ contains two flat bands at energies $\pm 2$. In terms of non-spatial symmetries \cite{Chiu2016}, the ground state $\ket{\psi_0}$ of $H_0$ is classified as a trivial gapped superconductor. As discussed in Sec. \ref{sec:projectiondatum}, the Gutzwiller projection of $\ket{\psi_0}$ yields a $\mathbb Z_2$ topologically ordered stabilizer state equivalent to the ground state of the Wen plaquette model. 

 \begin{figure}[t]
    \centering
    \includegraphics[width=1.0\linewidth]{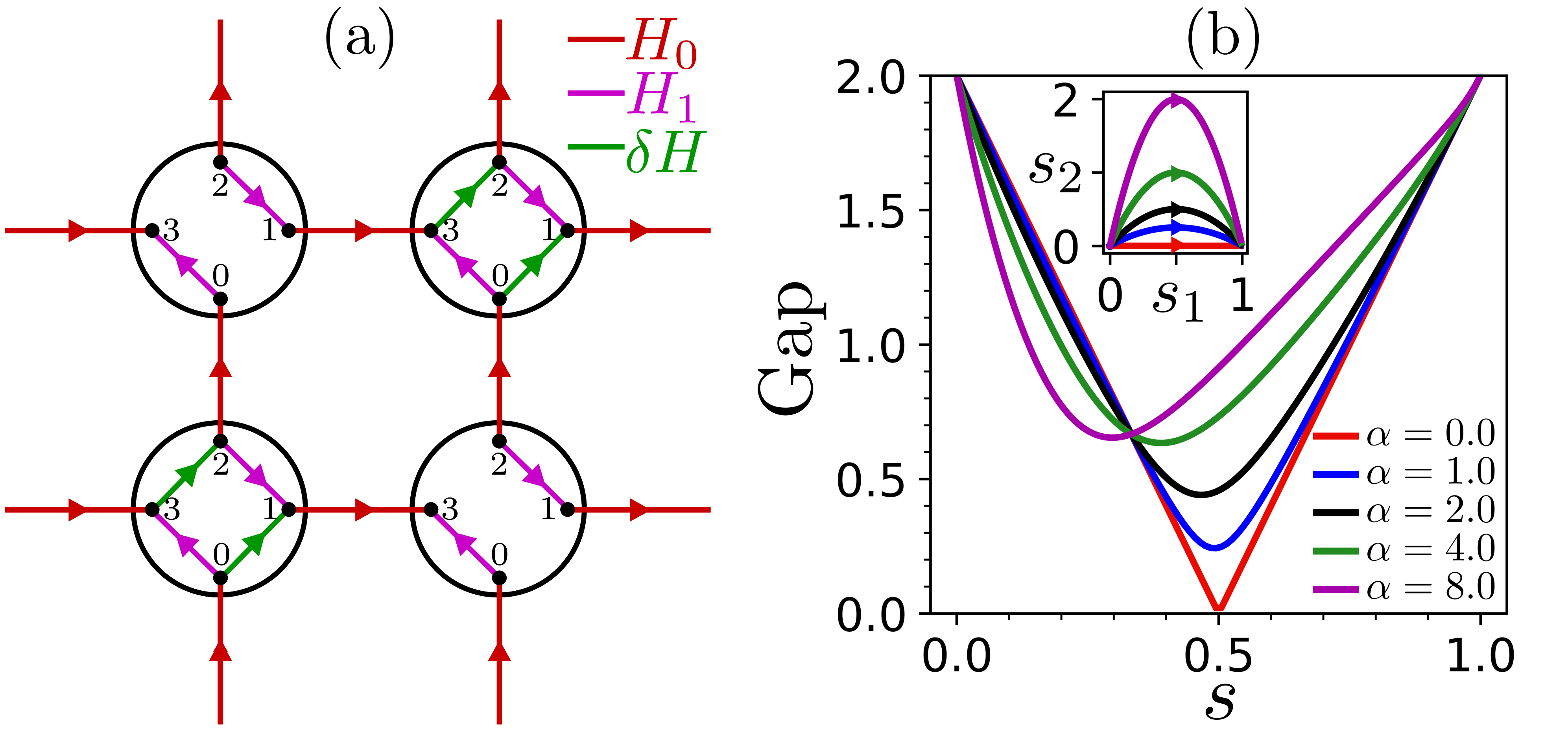}
    \caption{(a) Majorana fermion model on the square lattice. The different colors in the bonds represent the different bilinear  terms in the Hamiltonian in Eq. (\ref{Hpath}),  with red for $H_0$, purple for $H_1$, and green for $\delta H$.   (b) Energy gap along different paths on the $s_1 s_2$ plane, given by $s_1 = s$ and $s_2 = \alpha s(1-s)$ (inset).} 
    \label{fig:majorana}
\end{figure}

Next, we define $H_1$ as:
\begin{align}
H_1 & = - \sum_{\mb r} (i\gamma_{\mb r}^0 \gamma_{\mb r}^3 + i \gamma_{\mb r}^2 \gamma_{\mb r}^1).
\end{align}
In terms of complex fermions, we have \bea
H_1& =& -2 \sum_{\mb r} (f^\dagger_{\mb r,\uparrow} f^{\phantom\dagger}_{\mb r,\uparrow} - f^\dagger_{\mb r,\downarrow} f^{\phantom\dagger}_{\mb r,\downarrow}).
\eea
Clearly, $H_1$ is equivalent to a Zeeman term acting on the  fermionic partons. In this case, all Majorana fermions are paired within the same site, and the projected  state  $  P_G\ket{\psi_1}$ is a   product state in which  all spins are polarized along the $Z$ spin direction.  Hence, if we write a free-fermion Hamiltonian $H(s)$ that connects $H_0$ and $H_1$, such that $\ket{\psi(s)}$ is its ground state, then we expect a phase transition for $  P_G \ket{\psi(s)}$ at some $s \in [0,1]$, similar to the  transition in the toric code perturbed by a magnetic field \cite{Trebst2007,Dusuel2011}.  

We first try the simple linear interpolation $H(s) = (1 - s)H_0 + sH_1$. In this case, we find that the gap of $H(s)$ closes at $s=0.5$. This result can be understood by noting that the classification of free-fermion Hamiltonians is enriched by lattice rotation and translation symmetries. When these symmetries are taken into account,  the ground state of $H_0$ should be regarded a topological crystalline superconductor \cite{Teo2013}, while the ground state of $H_1$ remains trivial.  

To avoid the gap closing in the free-fermion model, we  generalize the Hamiltonian with a two-dimensional parametrization:
\begin{equation}
H(s_1, s_2) = (1 - s_1) H_0 + s_1 H_1 + s_2 \delta H,\label{Hpath}
\end{equation}
Here $\delta H$ acts  on only one of the two sublattices (denoted by $A$) and is given by
\bea
\delta H& = & - \sum_{\mb r \in A} (i\gamma_{\mb r}^0 \gamma_{\mb r}^1 + i \gamma_{\mb r}^3 \gamma_{\mb r}^2) \\
&= & 2 \sum_{\mb r \in A} (f^\dagger_{\mb r,\uparrow} f^{\phantom\dagger}_{\mb r, \downarrow} + f^\dagger_{\mb r,\downarrow} f^{\phantom\dagger}_{\mb r,\uparrow} ).\label{eq:zfield}
\eea
This new term  breaks translational symmetry and can be interpreted as a sublattice-dependent transverse  field.  Figure \ref{fig:majorana}(a) shows a schematic representation  of all the terms in $H(s_1, s_2)$ in the Majorana representation. 

\begin{figure}[t]
    \centering
    \includegraphics[width=1.0\linewidth]{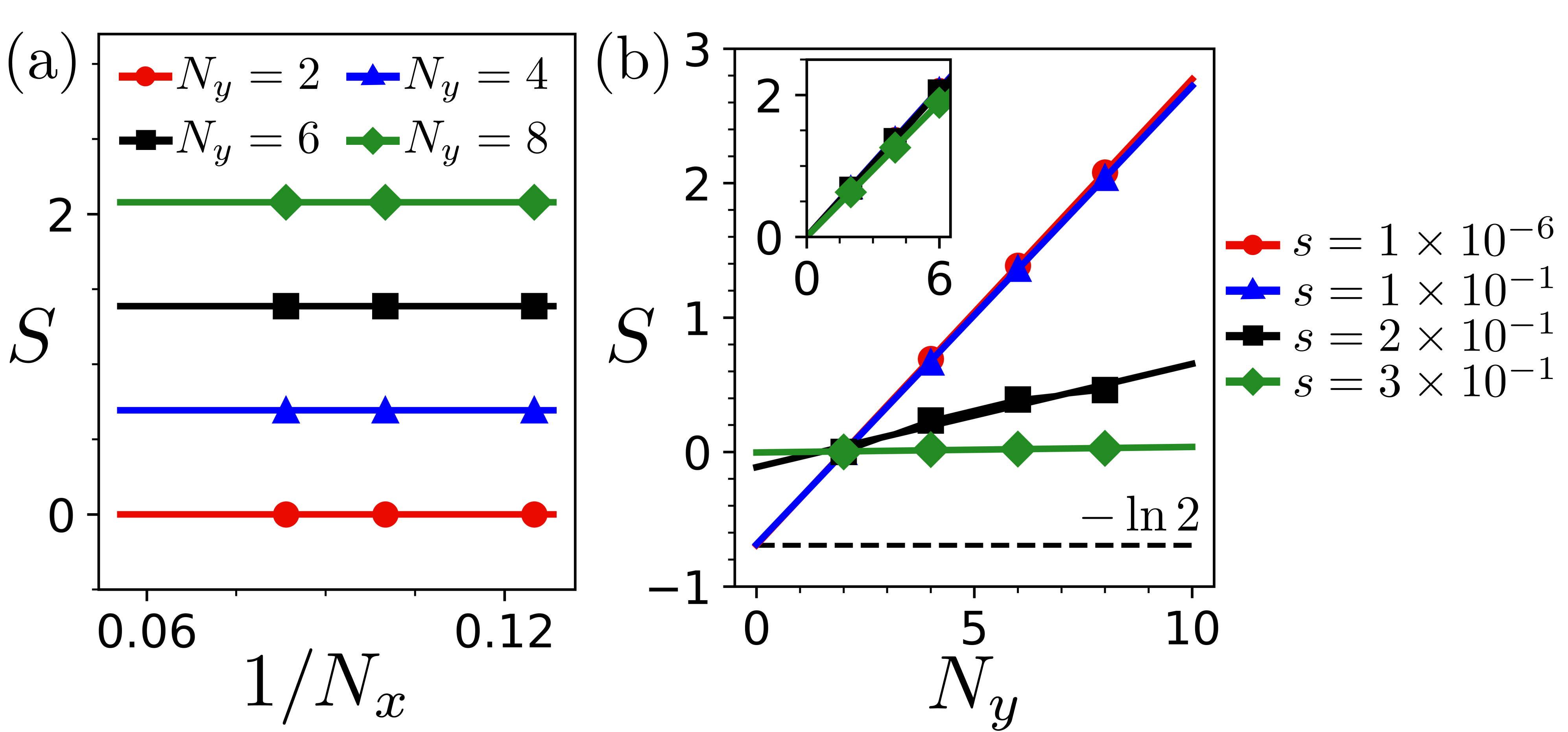}
    \caption{Entanglement entropy ($S$): (a) of $  P_G\ket{\psi(s)}$ as a function of  $1/N_x$ for several $N_y$ and $s = 1\times 10^{-6}$; (b) of $\ket{\psi(s)}$ (inset) and $ P_G\ket{\psi(s)}$ for $N_x \to \infty$ and $\alpha = 4$ as a function of $N_y$. Fitting the data with $S = aN_y - \gamma$, we get $\gamma = 0$ before the projection for all values of $s$. After the projection, we get  $\gamma = \ln 2$ (dashed line) for $s \leq 0.1$ and $\gamma =0 $ for $s \geq 0.2$}
    \label{fig:ent}
\end{figure}

We can then parametrize paths on the $s_1s_2$ plane that connect $H_0=H (0,0)$ with $H_1=H(1, 0)$. In particular, we choose \be
s_1 = s,\quad s_2 = \alpha s(1 - s).\label{s1s2parabola}\ee 
The respective paths for different values of $\alpha$ are shown in the inset of  Fig. \ref{fig:majorana}(b). We  calculate  the energy  gap  along these   paths, as  depicted in Fig. \ref{fig:majorana}(b). The gap does not close for any value of $s$ as long as  $\alpha \neq 0$, which means that there is a family of adiabatic paths that connect $\ket{\psi_0}$ and $\ket{\psi_1}$. \new{We emphasize that the adiabatic path constructed here is not unique, but in general it is important to break lattice symmetries to lift the distinction bewteen trivial and topological crystalline superconductors and avoid the gap closing in the free-fermion Hamiltonian \cite{Teo2013}.} We conclude that $\ket{\psi_0}$ and $\ket{\psi_1}$ are in the same phase, while $  P_G\ket{\psi_0}$ and $  P_G \ket{\psi_1}$ are not.  In the next subsection,  we perform  numerical calculations to find the critical values of $s$ where the transition occurs after the projection.

\subsection{Numerical results}

To distinguish  the two phases associated with  $  P_G \ket{\psi_0}$ ($\mathbb{Z}_2$ QSL) and $  P_G \ket{\psi_1}$ (trivial polarized state), we use the topological entanglement entropy \cite{kitaev2006topological,levin2006detecting}. The area law for the entanglement entropy of a partition with perimeter $L$ in two dimensions reads 
\begin{equation}
S(L) = a L - \gamma,
\end{equation}
where $a$ is a nonuniversal prefactor and  $\gamma = \ln \mathcal{D}$ is called the topological entanglement entropy. Here  $\mathcal D$  is the total quantum dimension, a universal property determined by the type of topological order. For a trivial phase, we have $\mathcal D = 1$. Meanwhile, for $\mathbb{Z}_2$ topological order it is known that $\mathcal D = 2$. Therefore, we expect to find the critical $s_c$ where $\gamma$ drops from $\ln 2$ to zero.

\begin{figure}[t]
    \centering
    \includegraphics[width=1.0\linewidth]{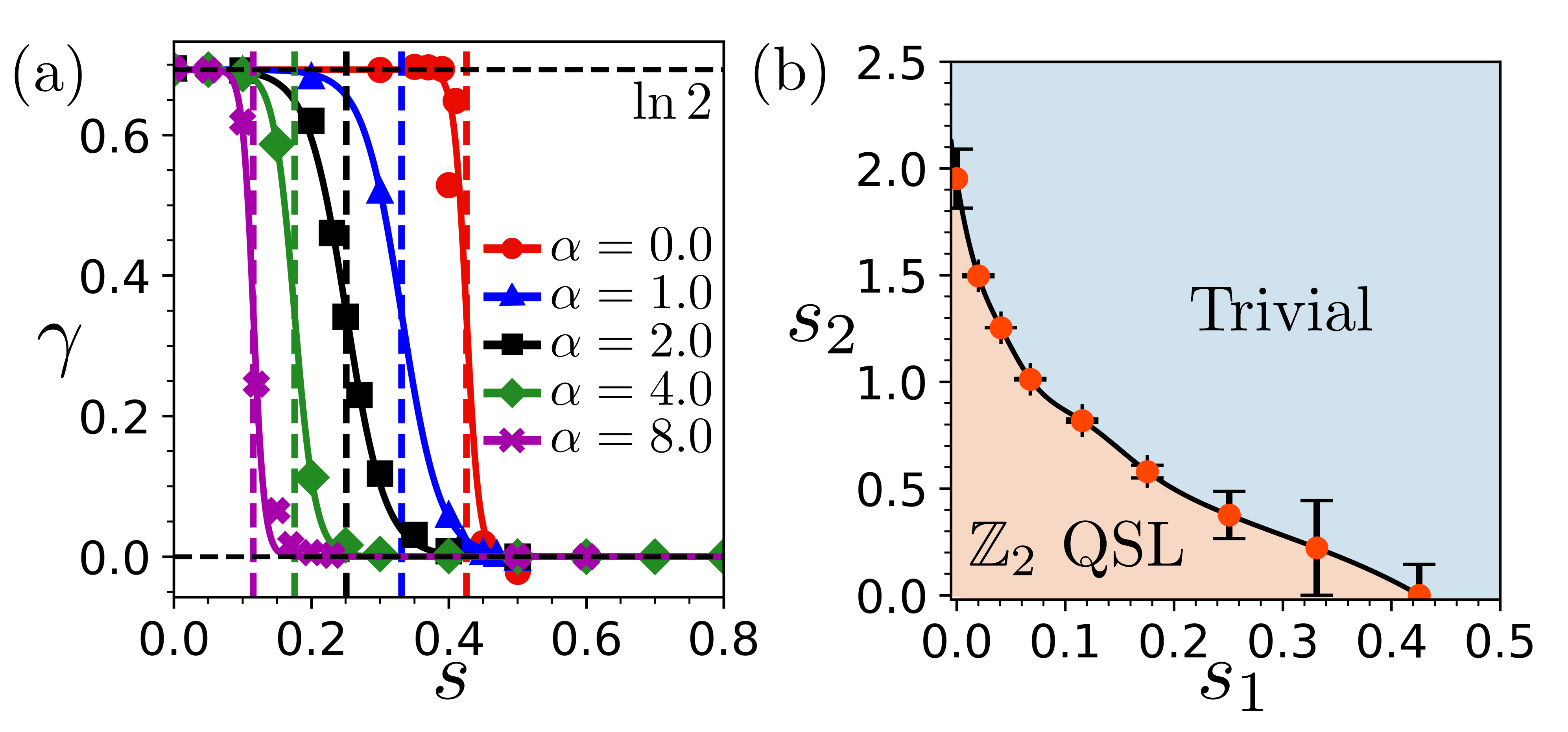}
    \caption{(a) Topological entanglement entropy as a function of $s$ for different adiabatic paths labeled by $\alpha$. The vertical dashed lines mark the inflection point of each curve, indicating the transition point.    (b) Phase diagram in the $s_1s_2$ plane,  see Eq.~(\ref{s1s2parabola}), for the projected states.}
    \label{fig:tee}
\end{figure}

To compute the topological entanglement entropy, we first compress   $\ket{\psi(s)}$ into an MPS following Ref. \cite{Jin2020}. Given the local form of an MPS, the Gutzwiller projection is naturally implemented in this language. For the MPS to accurately  represent  a gapped quantum state in two dimensions, its bond dimension $D$ has to grow exponentially with the linear system size.  With the MPS form of $ P_G \ket{\psi(s)}$, we determine  $\gamma$ following Ref. \cite{jiang2012identifying}. First, we impose open boundary conditions in one direction (say $x$) and periodic boundary conditions in the other ($y$), such that $N_x$ ($N_y$) is the number of sites in the open (periodic) direction. In all calculations, we use the bond dimension $D = 1024$, which is large  enough for $N_y \leq 8$.  Then, we find the entanglement entropy for a partition at $x = N_x/2$. For each $N_y$, we extrapolate the entanglement entropy for $N_x \rightarrow \infty$, as shown in Fig. \ref{fig:ent}(a). Last, we plot the results as a function of $N_y$ and fit them to $S(N_y) = aN_y - \gamma$. Figure \ref{fig:ent}(b) shows the results for different values of $s$ within the path with $\alpha = 4$. As expected, for all values of $s$, we get $\gamma = 0$ for $\ket{\psi(s)}$. Meanwhile, we get $\gamma = \ln 2$ for $P_G \ket{\psi(s)}$ if $s < s_c$ and $\gamma = 0$ for $s > s_c$. 

It is important to mention that, for this scheme to compute $\gamma$ to work, we need to ensure that $\ket{\psi(s)}$ is a minimally entangled state by carefully selecting the zero modes that could appear in the spectrum of $H(s)$ \cite{HHTu2020_Kitaev}. For $s = 0$, and using open boundary conditions in the $x$ direction, there are $2N_y$ zero modes. However, for $s > 0$, we observe that this degeneracy is lifted, making the choice of the occupied single-particle modes trivial. Given that we already had the analytical result for $P_G \ket{\psi_0}$, we choose to compute $P_G \ket{\psi(s)}$ for $s \geq 10^{-6}$   to avoid any ambiguity.

Using this algorithm, we find $\gamma$ along the adiabatic paths defined by different values of $\alpha$ in Eq. (\ref{s1s2parabola}). The results are shown in Fig. \ref{fig:tee}(a). While the numerical results obtained for  finite-size systems vary smoothly with the parameter $s$, we can estimate the critical value $s_c$ by taking  the inflection point of each curve that interpolates the numerical data. Our results show  that the   quantum phase transition  occurs at a finite   value $s_c \in (0,1)$. In particular, note that  for $\alpha=0$ the critical point  for the topological phase transition in the projected state  differs from the value of $s$ at which the gap of the free-fermion Hamiltonian vanishes. Having calculated   the values of $s_c(\alpha)$, we can draw the line that separates  the two phases in the $s_1s_2$ plane, parametrized by  $s_1 = s_c(\alpha)$ and $s_2 = \alpha s_c(\alpha) [1 - s_c(\alpha)]$. The corresponding phase diagram is represented in Fig. \ref{fig:tee}(b), \new{where the error bars represent the error  estimated from the width of the drop in the entanglement entropy.}

Using the projected MPS, we also compute local observables that are useful for characterizing the transition.  Figure \ref{fig:magplaq}(a) shows the magnetization \be
m_z(s)=\frac1{N_xN_y}\sum_{j}\braket{\psi(s)|P_G Z_jP_G|\psi(s)}\ee
for the path with $\alpha = 4$  and its respective  susceptibility  $\chi_z = d m_z/ds$. We can see that $\chi_z$ peaks close to the previously obtained critical point, with the peak position and height weakly dependent on $N_x$.  A similar trend is observed in $m_z$, as the curves for different sizes of the system overlap.  The same behavior is observed when the system size is increased in the periodic direction. \new{Here we  rely on the topological entanglement entropy as our main criterion to identify the critical point because this nonlocal quantity is more suited to capture a topological phase transition.} In Fig.  \ref{fig:magplaq}(b) we show  the expectation value of the plaquette operator $F_p=  Z_i X_jZ_kX_l$, which appears in the  Wen plaquette model in  Eq. (\ref{eq:wenhamiltonian}).  We have $\langle F_p\rangle=1$ at $s=0$ and $\langle F_p\rangle=0$ and $s=1$. We observe that the drop from one to zero along different adiabatic paths accompanies the previously computed value of the critical point. We note, however, that  the decay is much broader than the one obtained for the topological entanglement entropy, as expected because  the local observable  $\langle F_p\rangle$  is not an order parameter for the topological phase transition.

\begin{figure}[t]
    \centering
   \includegraphics[width=1.0\linewidth]{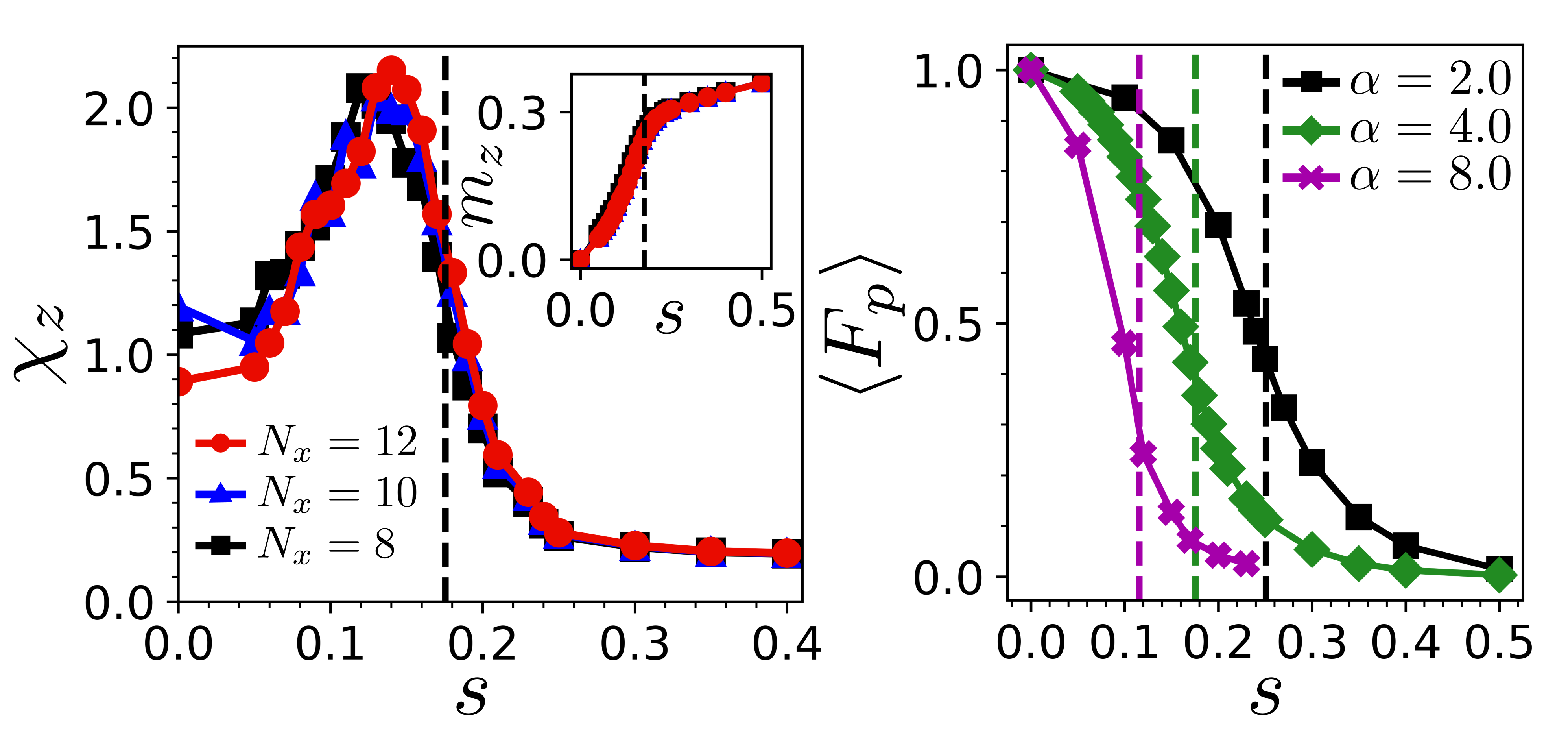}
    \caption{(a)  Magnetization $m_z$ (inset) and susceptibility $\chi_z$ along the adiabatic path with $\alpha = 4.0$ for different cylinder lengths $N_x$ and $N_y = 6$.  (b) Expectation value of the plaquette operator $\braket{F_p}$ for different adiabatic paths. The vertical dashed lines indicate the respective value of $s_c$  estimated from the entanglement entropy. }
    \label{fig:magplaq}
\end{figure}

Finally, we also evaluate the correlation matrix $C_{ij}^{zz}(s) = \braket{\psi(s)| {P}_G Z_i Z_j  {P}_G |\psi(s)}$. The result is shown in Fig. \ref{fig:corr}(a). We observe an exponential decay as a function of distance $r$ with a finite correlation length for all values of $s$. At first, this seems trivial, given that an MPS of finite bond dimension $D$ always has a finite correlation length \cite{Cirac2021}. However,  we verify that all observables converge for $D = 1024$ and $N_y \leq 8$, suggesting that the obtained MPS closely approximates the true projected state $  P_G \ket{\psi(s)}$. \new{For example, Fig. \ref{fig:corr}(b) shows the maximum value of the correlation length as a function of the bond dimension along the path with $\alpha = 4$ for fixed $N_x = 12$. As shown in Figs. \ref{fig:corr}(c) and (d), the maximum value of $\xi$ occurs near, but not exactly at  the value of $s_c$ obtained from the topological entanglement entropy. Importantly, we observe that the circumference $N_y$ dominates the finite-size effects in the correlation length. Given the limitations in the available system sizes, we are not able to conclusively establish whether the correlation length remains finite or diverges at the critical point in the thermodynamic limit. We note that in the analogous problem of a topological phase transition driven by a magnetic field in the perturbed toric code, both  first-order and continuous transitions are possible depending on the  field direction \cite{Dusuel2011}. } 

\begin{figure}
    \centering
    \includegraphics[width=1.0\linewidth]{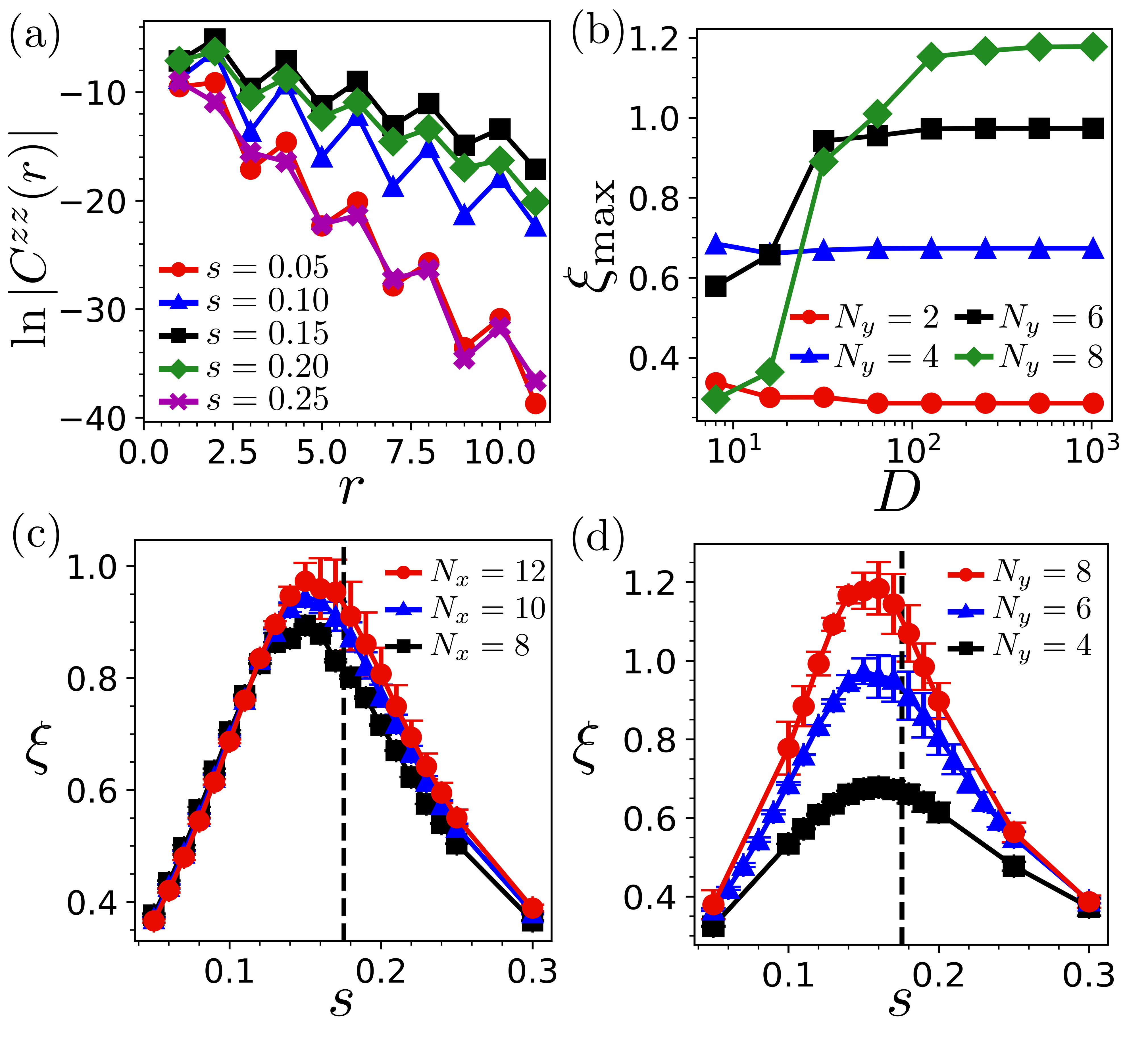}
    \caption{\new{Results for the evaluation of the correlation matrix using the projected MPS  along the adiabatic path with $\alpha = 4.0$. (a) Exponential decay of the correlation function as a function of distance for different values of $s$. (b) The maximum value that the correlation length $\xi$ governing the exponential decay in (a) assumes along the adiabatic path, $\xi_{\mathrm{max}}$, as a function of the bond dimension $D$ for different cylinder circumferences $N_y$ and fixed $N_x = 12$.  The value of $\xi$ along the adiabatic path for (c) fixed $N_y = 6$  and (d) fixed $N_x = 12$, varying the values of $N_y$ and $N_x$, respectively.  The vertical dashed line indicates the value of $s_c$  estimated from the entanglement entropy.}} 
    \label{fig:corr}
\end{figure}

 \section{Conclusions \label{sec:conclusion}}

  We have discussed the relation between Majorana fermion codes and the corresponding  Gutzwiller-projected spin states. We focused  on lattices that host  four Majorana fermions per site, which  can be expressed  in terms of spin-$1/2$ fermionic partons of quantum spin liquids. The unprojected fermionic state is the ground state of a mean-field Hamiltonian in which each Majorana fermion is   paired with another one in its vicinity. The key step in our approach   is to identify the spin stabilizer group  generated  by products of Majorana bilinears that commute with the  projector  imposing the local single-occupancy constraint. When this stabilizer group is two-dimensional and two-colorable, meaning that the projected graph tesselates the closed surface in a way that  two plaquettes labeled by the same color never share an edge, we demonstrate that the projected state exhibits all the characteristics of $\mathbb Z_2$ topological order. The construction also yields  trivial product states from the projection of fermionic states  in which all Majorana fermions are paired within the same site.

We also investigated the topological phase transition in projected fermionic states. For this purpose, we considered states that interpolate between two Majorana fermion codes which are both topologically trivial before the projection but become either a $\mathbb Z_2$ spin liquid or a trivial polarized state after the projection.  Our numerical results for the topological entanglement entropy allow us to pinpoint the critical  value of the  parameter that controls  the projected wave function.  While we have not referred to a specific spin Hamiltonian, this transition is analogous to the quantum phase transition driven by a magnetic field  in the perturbed toric code.

Our main conclusion is that caution is warranted when relying on the correspondence between free-fermion states and quantum spin liquid states. The Gutzwiller projection can introduce  significant, and sometimes unexpected,    effects. Therefore, one should check the topological properties of the projected states to ensure that they are in the same phase. This kind of  analysis could be extended to  parton constructions of other quantum spin liquids beyond  stabilizer states, for instance, chiral spin liquids  obtained via the projection of Chern insulators.

\begin{acknowledgments}
C.C.B. and E.C.A. acknowledge useful discussions with H-H Tu. We acknowledge funding by  Brazilian agencies CAPES (R.A.M.) and Conselho
Nacional de Desenvolvimento Cient\'ifico e Tecnol\'ogico  -- CNPq  (R.G.P. and E.C.A). C.C.B and E.C.A were supported by FAPESP, Grants No. 2021/06629-4, No. 2022/15453-0, and No. 2023/06101-5. This work was supported by a grant from the Simons
Foundation (Grant No. 1023171, R.G.P.), by  Finep (Grant No. 1699/24 IIF-FINEP, R.G.P.),  the MOST Young Scholar Fellowship (Grants
No. 112-2636-M-007-008- and No. 113-2636-M-007-002-), National Center for Theoretical Sciences (Grant No.
113-2124-M-002-003-) from the Ministry of Science and
Technology (MOST), Taiwan, and the Yushan Young
Scholar Program (as Administrative Support Grant Fellow) from the Ministry of Education, Taiwan. Research at IIP-UFRN is supported by Brazilian ministries MEC and MCTI.
\end{acknowledgments}

\section*{Data Availability}

The data that support the findings of this article are openly available \cite{data}.

\appendix

\section{Proof of Lemma 1\label{app:proof}}

First, we establish the isomorphism. In the following we   refer to the stabilizer group $\mathcal D = \langle \{\gamma^0_j \gamma^1_j \gamma^2_j \gamma^3_j\}_{j=1}^n\rangle $ and the corresponding centralizer $\mathrm{C}_{\mathrm{Maj}(4n)}(\mathcal D)$.  Any operator  $m \in \mathrm{C}_{\mathrm{Maj}(4n)}(\mathcal D)$ must contain two Majorana modes per site since this is the condition to commute with all local parities in  $\mathcal D$. That is, the group admits the following presentation:
\be
\mathrm{C}_{\mathrm{Maj}(4n)}(\mathcal M ) = \langle iI, \{i \gamma_i^{p_{i,1}} \gamma_i^{p_{i,2}} \}_{p_i \in {[4] \choose 2} \;, i=1} ^n \rangle\;,
\ee
where $p_i = (p_{i,1},p_{i,2}) \in {[4]\choose 2}$ is one of the $ {4 \choose 2} = 6$ pairings of $\{0,1,2,3\}$. 

The Kitaev's parton map on Eq. (\ref{eq:kitaevpartons}), define a group homomorphism ${\mathrm P}: \mathrm C_{\mathrm{
Maj
}(4n)}(\mathcal D) \to \mathcal P_n$ by its local action on the generators of the centralizer, as listed in Table \ref{eq:kitaevpartontable}.
\begin{table}[b!]
    \centering
    \begin{tabular}{c|c}
        $m$ & ${\mathrm{P}}(m)$ \\
        \hline\hline
        $i\gamma^1_j \gamma^0_j$ & $X_j$\\
        \hline 
        $i\gamma^2_j \gamma^3_j$ & $X_j $\\
        \hline 
        $i \gamma^2_j \gamma^0_j$ & $Y_j$\\
        \hline
        $i \gamma^3_j \gamma^1_j $ & $Y_j$ \\
        \hline
        $i \gamma^0_j \gamma^3_j $ & $Z_j$ \\
        \hline
        $i \gamma^2_j \gamma^1_j$ & $Z_j$ 
    \end{tabular}
    \caption{Action of Kitaev's parton map on Majorana bilinears on  site $j$.}
    \label{eq:kitaevpartontable}
\end{table}
By its action, it is clear that
\begin{equation}
    \mathrm{ker}(\mathrm P) = \mathcal D\;.
\end{equation}
Note that:
\be
\mathcal M \cap \mathcal D = \left\{1, \prod_i D_i\right\}\;,
\label{eq:intesectionmaj}
\ee
since we assume $\mathcal M$ to be parity-even. Now, we are  interested in constructing a stabilizer group for $P_G|\psi_\mathcal M\rangle$. First, consider the centralizer $\mathrm{C}_{\mathcal M}(\mathcal D) \subseteq \mathcal M$, defined as the subgroup of $\mathcal M$ that commutes with the local parities. 

Any element of $\mathcal M$ can be seen as a subset of the edges of $G_\mathcal M = (V,E)$, since every Majorana bilinear uniquely corresponds to an edge. To have $m \in \mathrm{C}_\mathcal M(\mathcal D)$,  the corresponding subgraph $G_\mathcal M[m]$, defined as the induced subgraph defined by its edges, must have even degree on all vertices. Hence, $G_\mathcal M[m]$ is a cycle, and thus, the minimal number of generators of the centralizer, $|\mathrm{gen} \mathrm C_\mathcal M(\mathcal D)|$, is the number of independent cycles on a graph. Since it has only one simply connected component, the corresponding formula is:
\begin{equation}
    |\mathrm{gen} \mathrm C_\mathcal M(\mathcal D)| = |E| - |V| +1 =n+1 \;,
    \label{eq:cycleformula}
\end{equation}
where we have used the fact that $|E| =2n$, due to $G_\mathcal M$ having only degree-four vertices.
Now, consider the group $\mathcal M_\mathcal D \equiv \langle \mathrm C_\mathcal M(\mathcal D), \mathcal D\rangle \subseteq \mathrm{Maj}^+(4n)$, which is a Majorana stabilizer group, with $P_G|\psi_\mathcal M \rangle \in V_{\mathcal M_\mathcal D}$. We know that $|\mathrm{gen}(\mathcal D)| = n$. Furthermore, since the overlap of $\mathcal M$ with $\mathcal D$ is  nontrivial with the global parity only, the number of generators of $\mathcal M_\mathcal D$ is
\be
|\mathrm{gen}(\mathcal M_\mathcal D)| = (n+1)+ n -1 = 2n\;,
\ee
meaning that $P_G|\psi_\mathcal M\rangle$ is the \emph{unique} state stabilized by $\mathcal M_\mathcal D$ since the corresponding fermionic encoding rate vanishes. We claim that the desired stabilizer group is:
\begin{equation}
    P\mathcal M \equiv \mathrm{im}_\mathrm{P}[\mathcal M_\mathcal D] =  \mathrm{im}_\mathrm P[\mathrm C_\mathcal M(\mathcal D)]\;,
\end{equation}
where $\mathrm{im}_\mathrm P(\cdot)$ denotes the image under $\mathrm P$.
It satisfies all the properties of a stabilizer state, since $\mathrm P$ is a homomorphism. Moreover,  $ P\mathcal M$ stabilizes a unique state, $|P\psi_\mathcal M\rangle$, since $|\mathrm{gen}(P\mathcal M)| =(n+1) -1$, given that   only $\prod_i D_i$ is in the kernel of $\mathrm P$.

We define the \emph{charge} of two operators $m_1,m_2 \in \mathrm{Maj}(4n)$ or $P_1, P_2 \in \mathcal P_n$ as:
\begin{align}
    m_1 m_2 &= q(m_1,m_2) m_2 m_1\;,\\
    P_1 P_2 &= q(P_1, P_2) P_2 P_1\;,
\end{align}
assigning a $\{-1,1\}\in \mathbb Z_2$ phase for each pair in the corresponding groups. It follows   that, given $m_1, m_2 \in \mathrm{C}_{\mathrm{Maj}(4n)}(\mathcal D)$, $q(m_1, m_2) = q(\mathrm P(m_1), \mathrm P(m_2))$. 
Now, we use the following fact: Given $\mathcal S$ as a qubit (or Majorana) stabilizer group that stabilizes a normalized, unique state $|\mathcal S\rangle$, if $O$ is an element of $\mathcal P_n$ (or $\mathrm{Maj}(4n)$), it follows that
\be
\langle \mathcal S| O |\mathcal S\rangle = \mathbf 1_\mathcal S(O)\;, 
\ee
where $\mathbf 1_\mathcal S$ is the indicator function of $O \in \mathcal S$ ($\mathbf 1_\mathcal S(O) =1$ if $O \in \mathcal S$, $\mathbf 1_\mathcal S(O) =0$ otherwise). This follows from the fact that, if the code has $k=0$, then $\mathrm C_{\mathcal P_n}(\mathcal S) \cong  \mathcal S$. Hence, either an operator $O$ is in the stabilizer group, with $\langle \mathcal S |O|\mathcal S\rangle =1$, or the Pauli anticommutes with some of the generators of the stabilizer group, implying a vanishing expectation value: $\langle \mathcal S|O|\mathcal S\rangle = 0$.

Then, consider $m \in \mathcal M_\mathcal D$.
It follows that $\mathrm P(m) \in 
P\mathcal M $, even if $m \in \mathcal D$, since in this case $\mathrm P(m) = 1$. If $m \not \in \mathcal M_D$, then $m \notin \mathrm{C}_{\mathrm{Maj}(4n)}(\mathcal M_\mathcal D)$, since $V_{\mathcal M_\mathcal D}$ is a $k=0$ code, and therefore, $\mathrm P(m) = 0$. We conclude, then:
\be
\frac{\langle \psi_\mathcal M|P_G mP_G|\psi_\mathcal M \rangle}{\langle \psi_\mathcal M|P_G|\psi_\mathcal M\rangle} = \langle P \psi_\mathcal M|\mathrm{P}(m) |P \psi_\mathcal M\rangle\;.
\ee

\section{ Proof of Lemma 2 \label{prooflemma2}}
To  tackle Lemma 2, let us briefly review some concepts from algebraic topology which will be useful for this result and all the following Lemmas.

First, let us define a chain complex. The idea starts by defining \emph{binary} vector spaces (vector spaces with the field $\mathbb F_2$) corresponding to the embedding of the projected graph into the surface. This is constructed from the embedding $\sigma: G_\mathcal M\to \Sigma$: Given $V, E$ as the vertices and edges of $G_\mathcal M$, the fact that $\sigma$ is cellular, implies the existence of a set of faces, $F_\sigma$. Then, the span over the binary field of vertices, edges, and faces defines the 0-,1-, and 2-chains, respectively:
\be
X_0 \equiv \mathrm{span}_{\mathbb F_2} V \, ; \quad X_1 \equiv \mathrm{span}_{\mathbb F_2} E \, ; \quad X_2 \equiv \mathrm{span}_{\mathbb F_2}F_\sigma\;.
\ee
The addition in the vector space is defined $\mod 2$. For example, for $v \in V$, we have $v+v = 0 \in X_0$. They come with a boundary structure. For example, given a facet $f \in F_\sigma$, there is a set $\partial f\subseteq E$ of edges that contains the boundary of $f$. Hence, we can define a linear map $\partial_2: X_2 \to X_1$ by its action on the faces:
\be
\partial_2 f = \sum_{e \in \partial f}e\;,
\ee
and doing linear extension. Similarly, the linear map $\partial_1: X_1 \to X_0$ is   defined by taking the vertices where each edge is defined, that is, if $e = (v_0, v_1)$, then $\partial_1e = v_0 + v_1$. A chain complex is defined by the sequence of chains and its boundary maps:
\be
X_\sigma:0 \xrightarrow{\partial_3} X_2 \xrightarrow{\partial_2}X_1 \xrightarrow{\partial_1}X_0 \xrightarrow{\partial_0}0\;,
\ee
where we added the trivial vector space $0$ in both ends, with the trivial inclusion $\partial_3(0) = 0 \in X_2$ and projection $\partial_0(v \in X_0) = 0$, for reasons that will become clear shortly, it is useful to have both in and out boundary maps in each nontrivial chain. Note that the boundary maps satisfy the consistency condition
\begin{equation}
    \partial_i \partial_{i+1} = 0\;.
    \label{eq:nilpotencybound}
\end{equation}
Given $i \in \{0,1,2,\}$, this motivates the definition of the $i$-cycles $Z_i$ and $i$-boundaries $B_i$ subspaces as 
\begin{align}
&Z_i(X_\sigma) \equiv \mathrm{ker}\partial_i = \{z \in X_i \;|\; \partial_i z = 0\}\;,\\
&B_i(X_\sigma) \equiv \mathrm{im} \partial_{i+1} = \{b \in X_i \;|\;\exists x \in X_{i+1}:\; \partial_{i+1}x=b\}\;.
\end{align}
As their name suggests, cycle spaces describes boundaryless chains, and boundary spaces corresponds to chains which are boundaries of another. Due to the nilpotency condition in Eq. (\ref{eq:nilpotencybound}), it follows that $B_i(X_\sigma) \subseteq Z_i(X_\sigma)$. Their ``difference'' is captured by homologies, defined through the quotient
\be
H_i(X_\sigma) = Z_i(X_\sigma)/B_i(X_\sigma)\quad ; \quad i \in \{0,1,2\}\;.
\ee
The idea is that elements of the homology captures topological invariants of dimension $i$ of the target manifold $\Sigma$. For example, if the graph is to be embedded on $\Sigma = S^2$, then every 1-cycle can be obtained as a boundary of some region, as a consequence of being contractible, and hence $H_1(X_\sigma) = 0$. However, if the graph is embedded on a torus,   $\Sigma = T^2$, there are nontrivial 1-cycles which wrap around its handles, making $H_1(X_\sigma) \neq 0$. In fact, the target manifold itself holds homology spaces $H_i(\Sigma, \mathbb F_2)$ (read as the $i$-th homology space/group with $\mathbb F_2$ coefficients), isomorphic to any chain complex defined through a cellular embedding $\eta$ \cite{hatcher2001algebraic}:
\be
H_i(X_\eta) \cong H_i(\Sigma, \mathbb F_2)\;.
\ee
Correspondingly, one can also consider the dual chain complex $X_\sigma^*$, obtained from $X_\sigma$ by dualizing the vector spaces and maps. Referring to $d_{3-i} \equiv \partial_i^T$ as the coboundaries, we obtain
\be
X^*_\sigma:0 \xleftarrow{d_0} X^*_2 \xleftarrow{d_1}X^*_1 \xleftarrow{d_2}X^*_0 \xleftarrow{d_3}0\;.
\ee
Note that $X_i^* \cong X_i$. Similarly,  given  $i \in \{0,1,2\}$, we can define  cocycles and coboundaries as
\be
Z^i(X_\sigma) \equiv \ker d_{i} \quad ; \quad B^i(X_\sigma) \equiv \mathrm{im} d_{i+1}\;.
\ee
They are not completely independent from the cycles and boundaries. Denoting $\cdot: \mathbb F_2^n\times \mathbb F_2^n \to \mathbb F_2$ as the symmetric inner product on which the duality isomorphism is defined, we have $x \cdot \partial_i y = d_{n-i}x\cdot y$. Then, for any chain complex, the set of spaces $\{Z_i, B_i, Z^i, B^i\}$ are related by:
\be
Z_i^\perp \cong B^i \;,\; [Z^i]^\perp\cong B_i\;.
\label{eq:orthocyclesbound}
\ee
The corresponding cohomology spaces are defined as:
\be 
H^i(X_\sigma)  \equiv Z^i(X_\sigma)/B^i (X_\sigma)\;.
\ee
One can interpret the duals as the chain complex derived from the embedding of the dual of $G_\mathcal M$, where the vertices corresponds to the faces $F_\sigma$, and an edge is defined on a pair of faces if they share an edge on the original embedding. This motivates the result known as the Poincaré Lemma:
\be
H^{n-i}(X_\sigma) \cong H_i(X_\sigma)\;.
\ee

We will also need  \emph{combinatorial systoles}. Let $x \in X_i$, and define its support as the subset of vertices, edges or faces where it can be written as a decomposition, $x  = \sum_{s \in \mathrm{supp}(x)} s$. Then, its weight is defined as
\be
\mathrm{wt}(x)\equiv |\mathrm{supp}(x)|\;,
\label{eq:weightformula}
\ee
and equivalently for the duals, by the isomorphism $X_i^* \cong X_i$. The combinatorial systoles are defined as
\begin{align}
    \mathrm{csys}_i(X_\sigma)&\equiv \min_{\substack{z \in Z_i(X_\sigma)\\ [z] \neq 1 \in H_i(X_\sigma)}} \mathrm{wt}(z)\;,\\
    \mathrm{csys}^*_i(X_\sigma)&\equiv \min_{\substack{z \in Z^i(X_\sigma)\\ [z] \neq 1 \in H^i(X_\sigma)}} \mathrm{wt}(z)\;.
    \label{eq:systoles}
\end{align}
They are interpreted as the minimum length of nontrivial cycles/cocycles in the chain complex. They are of combinatorial nature because  they explicitly depend on the graph embedding.

We can now present a Proposition connecting our discussion of Majorana codes and homological notions:

\begin{proposition}
Let $(\mathcal M, \Sigma)$ be a two-dimensional Majorana group. Then, the corresponding centralizer over $\mathcal D$ is isomorphic to 1-cycles, that is $\mathrm C_\mathcal M(\mathcal D) \cong Z_1(X_\sigma)$.
\end{proposition}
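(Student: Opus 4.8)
The plan is to build an explicit $\mathbb{F}_2$-linear isomorphism $\phi$ between $\mathrm{C}_\mathcal{M}(\mathcal{D})$ and $Z_1(X_\sigma)=\ker\partial_1$, reusing the edge dictionary established in the proof of Lemma~\ref{Lemma: 1}. First I would recall that, modulo phases, $\mathcal{M}$ is an $\mathbb{F}_2$-vector space: each generator is a Majorana bilinear that squares to the identity, and the generators pairwise commute. Because $(\mathcal{M},\Sigma)$ is two-dimensional, every dimer joins two \emph{distinct} sites and each of the $4n$ Majorana modes belongs to exactly one dimer, so the $2n$ generators are in bijection with the $2n$ edges of $G_\mathcal{M}$ and every vertex has degree four. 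I then define $\phi:\mathcal{M}\to X_1=\mathrm{span}_{\mathbb{F}_2}E$ by sending each generator $i\gamma^a_i\gamma^b_j$ to its edge $e=((i,j),(a,b))$ and extending linearly. Since no Majorana mode is shared between two generators, the product of a subset $S$ of generators involves each mode exactly as often as the edges of $S$ meet it; multiplication in $\mathcal{M}$ therefore corresponds to symmetric difference of edge subsets, so $\phi$ is a homomorphism, and as it carries the generating set bijectively onto the spanning set $E$ it is an isomorphism onto $X_1$.

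The second step is the commutation bookkeeping that pins down the image of the centralizer. A single Majorana $\gamma^c_k$ anticommutes with $D_k=\gamma^0_k\gamma^1_k\gamma^2_k\gamma^3_k$ (three anticommutations) and commutes with every $D_{k'}$, $k'\neq k$ (four anticommutations); hence a bilinear $i\gamma^a_i\gamma^b_j$ anticommutes with exactly $D_i$ and $D_j$ and commutes with all other local parities. Consequently a product $m=\prod_{e\in S}g_e$ commutes with $D_k$ if and only if an even number of edges of $S$ are incident to $k$. Recalling that $\partial_1 e=v_0+v_1$ for $e=(v_0,v_1)$, one has $\partial_1\phi(m)=\sum_k(\deg_S k \bmod 2)\,k$, which vanishes precisely when every vertex has even degree. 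Thus $m\in\mathrm{C}_\mathcal{M}(\mathcal{D})$ if and only if $\phi(m)\in\ker\partial_1=Z_1(X_\sigma)$.

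Combining the two steps, the centralizer is exactly the preimage $\phi^{-1}(Z_1(X_\sigma))$, so restricting the isomorphism $\phi$ yields $\mathrm{C}_\mathcal{M}(\mathcal{D})\cong Z_1(X_\sigma)$, with surjectivity onto $Z_1$ inherited from $\phi$ being onto $X_1$. I expect the main obstacle to be the clean well-definedness of $\phi$ as a \emph{group} homomorphism rather than a mere set map: one must verify that products of generators produce no unexpected Majorana cancellations and that the $\pm1,\pm i$ phases never obstruct the additive edge bookkeeping. This is precisely where the nondegenerate, free-fermion, two-dimensional hypotheses do the work, since they guarantee a perfect matching of the $4n$ modes with all dimers between distinct sites, so that products of generators simply juxtapose disjoint pairs of modes and the edge support adds over $\mathbb{F}_2$.
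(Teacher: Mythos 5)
Your proposal is correct and follows essentially the same route as the paper: both establish the edge isomorphism $\mathcal M \cong X_1$ (the paper's map $S$, your $\phi$) and then identify the centralizer condition ``even number of Majorana modes per site'' with the cycle condition $\partial_1 = 0$. Your extra bookkeeping on the anticommutation of a bilinear with exactly $D_i$ and $D_j$, and on why the stabilizer group's phases do not obstruct the $\mathbb F_2$-linear structure, merely makes explicit what the paper leaves implicit.
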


\begin{proof}
First, note that one can reduce each of the chains $X_i$ as groups, where the vector space addition is seen as the group operation. We then have  the isomorphisms
\be
X_0 \cong \mathbb Z_2^{|V|} \quad  ; \quad X_1 \cong \mathbb Z_2^{|E|} \quad ; \quad X_2 \cong \mathbb Z_2^{|F_\sigma|}\;.
\label{eq:spacetogroup}
\ee
Since $\mathcal M$ is generated by the Majorana dimers that define the edges, it also follows that $\mathcal M \cong \mathbb Z_2^{|E|} $, implying that $\mathcal M$ and $X_1$ are isomorphic. This was already indicated  by the action on the Majorana bilinears in the map from  which  $E$ was constructed, since   $i \gamma^a_i \gamma^b_j \mapsto e^{ab}_{ij}\equiv \{(i,j), (a,b)\} \in X_1$. We   denote the isomorphism by $S: \mathcal M\to X_1$. We then note:
\begin{itemize}
    \item Let $m \in \mathrm{C}_\mathcal M(\mathcal D)$, which must have an even number of Majoranas per site   to commute with the local parities. Hence, the corresponding element in the 1-chain $S(m) \in X_1$ must have a vanishing boundary, $\partial_1 S(m) = 0$, showing that $S(m) \in Z_1(X_\sigma)$.
    \item Consider now some 1-cycle $z \in Z_1(X_\sigma)$. Note that the corresponding Majorana string $S^{-1}(m) \in \mathcal M$ must commute with the local parities, since the boundaries of each edge in $z$ have even degree in the induced subgraph, proving  that $S^{-1}(m) \in \mathrm{C}_\mathcal M(\mathcal D)$ and thus $S$ induces $\mathrm C_\mathcal M(\mathcal D) \cong Z_1(X_\sigma)$.
\end{itemize}
\end{proof}

Note that this isomoprhism was already foreshadowed in the proof of Lemma 1: The counting in Eq. (\ref{eq:cycleformula}) is nothing more than the dimension of the 1-cycles, $\mathrm{dim}Z_1(X_\sigma) = n+1$, mapping to the number of group generators under Eq. (\ref{eq:spacetogroup}).

The strategy now is to use Proposition  1 to map the problem onto a cycle decomposition problem, and then see how it behaves under the various maps until $P \mathcal M$ is reached.

Consider $Z_1(X_\sigma) = \ker \partial_1$. It decomposes as a union:
\be
Z_1(X_\sigma)= B_1(X_\sigma) \cup Z_1^h(X_\sigma)\;,
\ee
where $Z_1^h(X_\sigma) = Z_1(X_\sigma) \backslash B_1(X_\sigma)$ is the complement of the cycles with respect to boundaries, which corresponds to the 
essential cycles. We note that they do not close under the vector space addition, since a pair of nontrivial cycles can yield a trivial one in homology. By the isomorphism shown in Proposition 1, a similar decomposition happens in the centralizer:
\be
\mathrm C_\mathcal M(\mathcal D) \cong \mathrm C_\mathcal M^\partial( \mathcal D) \cup \mathrm C_\mathcal M^h(\mathcal D)\;,
\ee
where $\mathrm C_\mathcal M ^\partial(\mathcal D) \cong B_1(X_\sigma)$ and $\mathrm C^h_\mathcal M(\mathcal D)$ is the isomorphism image of $Z_1^h(X_\sigma)$. Then, the claimed decomposition $P\mathcal M  = P_\partial \mathcal M \cup P_h \mathcal M$ comes from the definition:
\begin{align}
    P_\partial \mathcal M &\equiv \mathrm{im}_\mathrm P[\mathrm C_\mathcal M^\partial(\mathcal D)]\;,\\
    P_h \mathcal M &\equiv \mathrm{im}_\mathrm P[\mathrm C_\mathcal M^h(\mathcal D)]\;.
\end{align}
It follows that $P_\partial \mathcal M, P_h\mathcal M$ have support at most as the weight [see Eq. (\ref{eq:weightformula})] of trivial and essential cycles, since the action of $\mathrm P$ can only reduce the weight.


\section{Proof of Theorem 1\label{proofT1}}

In this section, we  prove two propositions that fix the algebraic structure of projected codes and are important to prove the theorem stated in the main text.

First, we need a definition of a classical linear code, which is  simply  a   collection of bit strings of $n$ bits, that is, $C \subseteq \mathbb F_2^n$, with $ \mathrm{dim}C=2^k$. The natural errors we want to check are bit flips. It is useful to define the weight $\mathrm{wt}(x)$ of a bit string $x \in \mathbb F_2^n$  as the number of $1$'s in the bit string. Then, the natural distance measure is
\be
d= \min_{x,y \in C} \mathrm{wt}(x-y) =\min_{z \in C}\mathrm{wt}(z)\;.
\ee
We refer to $C$ as a $[n,k,d]$ classical code. A natural way to define a code is to specify its parity-check matrix $h: \mathbb F_2^n \to \mathbb F_2^{n-k}$, such that $C = \ker h$. The elements of the code, $x \in C$, referred to as codewords, are annihilated by $h$, $h\cdot x  =0$.

A simple example is to consider $n=3$, and take $C = \mathrm{span}_{\mathbb F_2}\{000,111\} \in \mathbb F_2^3$, referred to as the repetition code. It encodes $k=1$ bit and has distance $d = \mathrm{wt}(111)=3$, forming a $[3,1,3]$ code. The corresponding $2\times 3$ parity-check matrix $h_\mathrm{rep}$ is specified by the linear equations obtained from $h_\mathrm{rep}\cdot x=0$. For $x = x_1x_2x_3$, it would be $x_1+x_3=0$ and $x_2+x_3 = 0$. Their form motivates the name of $h$: Given some bit string $x$, each equation defined by the rows of $h$ defines a parity-check equation on the bit string, which confirms its presence in the code upon vanishing of all of them.

We will now provide a description to define special qubit stabilizer codes from a pair of classical ones. It has two ingredients:
\begin{itemize}
    \item A pair of classical linear codes $C_1,C_2 \subseteq \mathbb F_2^n$, with datum $[n,k_1,d_1]$ and $[n, k_2, d_2]$. Given the corresponding parity-check matrices $h_{1,2}:\mathbb F_2^n \to \mathbb F_2^{r_{1,2}}$, we impose that the orthogonal complements are contained in each other, that  is, $C_1^\perp \subseteq C_2$ and $C_2^\perp \subseteq C_1$. Thus, their parity-check matrices satisfies $h_1 h_2^T = h_2h_1^T=0$.
    \item a pair of Pauli tuples, $\mathbf P_{1}, \mathbf P_2 \in \mathcal P_1^{\times n}$, with $P_{i,k}$ supported on qubit $k$. We further impose them to be anticommuting, that is, $P_{1,k}P_{2,k} = -P_{2,k}P_{1,k}$. 
\end{itemize}
Note that given any Pauli tuple $\mathbf P \in \mathcal P_1^{\times n}$, we can pair it with a bit string $x \in \mathbb F_2^n$, defining 
\begin{equation}
    \mathbf P^x\equiv \bigotimes_{i=1}^n P_i^{x_i}\;.
\end{equation}
Then, since the rowspace of both parity-check matrices can be represented as $r_1= n-k_1$, $r_2=n-k_2$ bistrings of $n$ elements, denoted as $h_{i,1}, \cdots, h_{i, r_i}$, we can define the following stabilizer groups:
\begin{align}
\mathcal S_{\mathbf P_1} (C_1) &\equiv \langle \mathbf P_1^{h_{1,1}},\cdots, \mathbf P_1^{h_{1,r_1}}\rangle\;,
\label{eq:css1}\\
\mathcal S_{\mathbf P_2} (C_2) &\equiv \langle \mathbf P_2^{h_{2,1}},\cdots, \mathbf P_2^{h_{2,r_2}}\rangle\;.
\label{eq:css2}
\end{align}
A nontrivial fact is that they commute with each other, $[\mathcal S_{\mathbf P_1}(C_1), \mathcal S_{\mathbf P_2}(C_2)]=0$, due to the orthogonality condition on the codes. Given any $1 \leq m \leq r_1$ and $ 1 \leq n \leq r_2$:
\be
\mathbf P_1^{h_{1,m}} \mathbf P_2^{h_{2,n}} = (-1)^{(h_1 h_2^T)_{mn}} \mathbf P_2^{h_{2,n}} \mathbf P_1^{h_{1,n}}\;,
\ee
where $(h_1 h_2^T)_{mn}$ denotes the $(m,n)$ element of $h_1 h_2^T$. Since it vanishes due to the orthogonality condition, the corresponding groups commute. Our desired group is defined as
\be
\mathcal S_{\mathbf P_1, \mathbf P_2}(C_1, C_2) \equiv \langle \mathcal S_{\mathbf P_1}(C_1), \mathcal S_{\mathbf P_2}(C_2)\rangle\;,
\ee
which is stabilizer exactly due to the commutativity of the generating groups.

It turns out that the properties of these codes can be tightly encoded into a chain complex, defined as:
\begin{equation}
X(C_1, C_2): \mathbb F_2^{r_2}  \xrightarrow{h_2^T} \mathbb F_2^{n} \xrightarrow{h_1} \mathbb F_2^{r_1}\;,
\end{equation}
satisfying Eq. (\ref{eq:nilpotencybound}) by definition. Then, we have:

\noindent 
\begin{proposition}
Let $C_1, C_2 \subseteq \mathbb F_2^n$ be two linear codes with parameters $[n, k_1, d_1]$, $[n, k_2, d_2]$ with $C_1^\perp \subseteq C_2$, and let $\mathbf P_1, \mathbf P_2 \in \mathcal P_1^{\times n}$ be two anticommuting Pauli tuples. Then, $\mathcal S_{\mathbf P_1, \mathbf P_2}(C_1, C_2) \cong \mathcal S_{\mathbf P_1}(C_1) \times \mathcal S_{\mathbf P_2}(C_2)$ is a $[[n,k,d]]$ qubit stabilizer code with:
\begin{enumerate}
    \item $k = \mathrm{dim} H_1[X(C_1, C_2)] = \mathrm{dim} H^1[X(C_1, C_2)] = n-k_1-k_2$,
    \item $d = \min(\mathrm{csys}_1[X(C_1, C_2)], \mathrm{csys}^*_1[X(C_1,C_2)]) \geq \min(d_1,d_2)$.
\end{enumerate}
\end{proposition}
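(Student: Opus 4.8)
The plan is to treat $\mathcal{S}_{\mathbf{P}_1,\mathbf{P}_2}(C_1,C_2)$ as an abstract CSS code whose logical content is extracted from the chain complex $X(C_1,C_2)$, handling the two parameters $k$ and $d$ by splitting the analysis along the two Pauli types. First I would confirm that the group is a genuine stabilizer group isomorphic to $\mathcal{S}_{\mathbf{P}_1}(C_1)\times\mathcal{S}_{\mathbf{P}_2}(C_2)$. Commutativity across the two factors is the orthogonality relation $h_1h_2^{T}=0$ already established, while commutativity within each factor is automatic because $\mathbf{P}_1^{a}$ and $\mathbf{P}_1^{a'}$ (and likewise for $\mathbf{P}_2$) are built from a single Pauli tuple. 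For the product structure and the condition $-I\notin\mathcal{S}$, I would use that $P_{1,i}$ and $P_{2,i}$ anticommute on every site, so that $P_{1,i}^{a_i}P_{2,i}^{b_i}=\pm I$ forces $a_i=b_i=0$; hence any group element decomposes uniquely as $\mathbf{P}_1^{a}\mathbf{P}_2^{b}$ and no product of generators can equal $-I$.

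Next I would compute $k$ by setting up the dictionary between Pauli operators and the chains of $X(C_1,C_2)$. A $\mathbf{P}_2$-type operator $\mathbf{P}_2^{a}$ commutes with every $\mathbf{P}_1$-stabilizer iff $h_1a=0$, i.e.\ $a\in\ker h_1=C_1$, and it represents a nontrivial logical iff its class in $H_1[X(C_1,C_2)]=\ker h_1/\mathrm{im}\,h_2^{T}$ is nonzero; dually, $\mathbf{P}_1$-type logicals are classified by $H^1[X(C_1,C_2)]=\ker h_2/\mathrm{im}\,h_1^{T}$. The number of encoded qubits is therefore $\dim H_1=\dim H^1$, the equality being the standard fact that dualizing a complex of $\mathbb{F}_2$-vector spaces preserves the dimension of the middle (co)homology. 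I would then evaluate this dimension by rank--nullity applied to $h_1$ and $h_2^{T}$, using $\dim\ker h_i=k_i$ together with $h_1h_2^{T}=0$ and $C_1^{\perp}\subseteq C_2$, to obtain the stated encoding rate.

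For the distance I would argue that a minimum-weight nontrivial logical may be taken of pure $\mathbf{P}_1$- or $\mathbf{P}_2$-type, so that $d=\min(\mathrm{csys}_1,\mathrm{csys}_1^{*})$. The decisive observation is that the support of a mixed operator $\mathbf{P}_1^{b}\mathbf{P}_2^{a}$ is the union $\mathrm{supp}(a)\cup\mathrm{supp}(b)$, whose weight dominates that of each factor; thus if either class is nontrivial the weight is bounded below by the corresponding systole, while pure-type representatives actually realize the two systoles. It then remains to bound each systole by a classical distance: every nontrivial class in $H_1$ has all of its representatives sitting inside $\ker h_1=C_1$ and distinct from zero, hence of weight at least $d_1$, giving $\mathrm{csys}_1\geq d_1$; the dual argument gives $\mathrm{csys}_1^{*}\geq d_2$, and therefore $d\geq\min(d_1,d_2)$.

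The step I expect to be the main obstacle is pinning down the dictionary tightly enough that the identifications \emph{centralizer $=$ cycles} and \emph{stabilizer $=$ boundaries} hold exactly, so that the logical group is literally $H_1\oplus H^1$; once this is secured, the rank--nullity count for $k$, the union-support bound that reduces mixed logicals to pure ones, and the systole-to-distance inequalities are all routine bookkeeping.
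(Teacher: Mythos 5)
Your proposal is correct and follows essentially the same route as the paper's proof: both decompose an arbitrary Pauli as $\mathbf P_1^a\mathbf P_2^b$, identify centralizer conditions with cycles/cocycles and stabilizer membership with (co)boundaries of $X(C_1,C_2)$, obtain $k$ from rank--nullity, and get the distance from the observation that $\mathrm{wt}(a),\mathrm{wt}(b)\leq|\mathrm{supp}(\mathbf P_1^a\mathbf P_2^b)|$, which reduces the minimization to the two combinatorial systoles and then to $\min(d_1,d_2)$ by relaxing to all nonzero codewords. The only cosmetic difference is that you compute $k$ directly as $\dim H_1$ while the paper first counts generators and then identifies the count with the Betti number.
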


\begin{proof}
 First, note that the classical codes can be written as the cycle and cocyle spaces:
\begin{equation}
    C_1 =  Z_1[X(C_1,C_2)] \quad , \quad C_2 =  Z^1[X(C_1, C_2)]\;,
    \label{eq:codecycles}
\end{equation}
since they are kernels of the parity-check matrices, which are the respective boundary and coboundary operators of $X(C_1, C_2)$. Furthermore, due to Eq. (\ref{eq:orthocyclesbound}), we get
\begin{equation}
C_1^\perp = B^1[X(C_1,C_2)] \quad , \quad C_2^\perp = B_1[X(C_1,C_2)]\;.
\label{eq:orthocodeboundaries}
\end{equation}
We can now show both results.

\noindent
1. $k$ can be computed as
\begin{equation}
    k = n-r_1-r_2\;,
    \label{eq:encodingCSS}
\end{equation}
where we have used the fact that $\mathcal S_{\mathbf P_1}(C_1) \cap \mathcal S_{\mathbf P_2}(C_2) = \varnothing$. Since we have $r_1 = n-k_1$ and $r_2 = n-k_2$, the formula $k = n-k_1-k_2$ is shown. To show that it is related to the dimension of the first homology space, known as the Betti number, first note that:
\begin{align}
    r_1 &= \mathrm{dim}C_1^\perp = \mathrm{dim} B_1[X(C_1, C_2)]\;,\\
    r_2 &= \mathrm{dim}C_2^\perp = \mathrm{dim} B^1[X(C_1, C_2)]\;,
\end{align}
relating the encoding rate to the dimensions of the boundaries and coboundaries, $k=n-\mathrm{dim} B_1[X(C_1, C_2)] - \mathrm{dim} B^1[X(C_1, C_2)]$.
Given two vector spaces $V, W$ and a linear map $V \xrightarrow{f} W$, we know that $W \cong \mathrm{im} f \oplus \ker f$, relating the dimensions as $\mathrm{dim}W = \mathrm{dim}(\mathrm{im} f) + \mathrm{dim}(\ker f)$. Hence, by invoking the relations in Eq. (\ref{eq:orthocyclesbound}) again, the actions $\mathbb F_2^{r_2} \xrightarrow{h_2^T} \mathbb F_2^n$ and $\mathbb F_2^{r_1} \xrightarrow{h_1^T} \mathbb F_2^n$ give the corresponding relations:
\begin{align}
n &= \mathrm{dim} B_1[X(C_1,C_2)] + \mathrm{dim} Z^1[X(C_1,C_2)]\;,\\
n &= \mathrm{dim} B^1[X(C_1,C_2)] + \mathrm{dim} Z_1[X(C_1,C_2)]\;.
\end{align}
Using the first relation, we obtain $k=n-\mathrm{dim} B^1[X(C_1, C_2)] - (n-\mathrm{dim} Z^1[X(C_1, C_2)]) = \mathrm{dim} H^1[X(C_1, C_2)]$, using the fact that the dimension of a quotient space $V/W$ is the difference $\mathrm{dim} V- \mathrm{dim} W$. The second relation yields $k= \mathrm{dim}H_1[X(C_1, C_2)]$, proving the formula.

\noindent 
2. Due to the anticommuting property of the Pauli tuples, note that one can write the $n$ qubit Pauli group as $\mathcal P_n = \langle i, \mathbf P_1, \mathbf P_2\rangle$. That is, given $p \in \mathcal P_n$, there  exist $a,b \in \mathbb F_2^n$ such that:
\begin{equation}
    p \propto \mathbf P_1^a \mathbf P_2^b\;.
\end{equation}
The strategy to show the distance bound is to impose the constraint that $p$ must satisfy, and bound the weights of $a$ and $b$.

Note that imposing $p \in \mathcal S_{\mathbf P_1, \mathbf P_2}(C_1, C_2)$ is equivalent to imposing that $a \in C_1^\perp$ and $b \in C_2^\perp$, since the latter are the vector spaces spanned by the parity-check rows. If we relax   the condition to impose $p \in \mathrm{C}_{\mathcal P_n}[\mathcal S_{\mathbf P_1, \mathbf P_2}(C_1, C_2)]$, then this would correspond to $b \in C_1$ and $a \in C_2$, since the commutativity constraints to Eqs. (\ref{eq:css1}, \ref{eq:css2}) impose that the two bit strings be code words. Hence, $p$ is a logical Pauli of $S_{\mathbf P_1, \mathbf P_2}(C_1, C_2)$ if and only if
\be
a \in C_1 \backslash C_2^\perp \quad ; \quad b \in C_2 \backslash C_1^\perp\;.
\ee
Noting that $|\mathrm{supp}\mathbf P_1^\mathbf a| = \mathrm{wt}(\mathbf a) \leq |\mathrm{supp}[\mathbf P_1^\mathbf a \mathbf P_2^\mathbf b]$ and $|\mathrm{supp}(\mathbf P_2^\mathbf b| =\mathrm{wt}(\mathbf b) \leq |\mathrm{supp}[\mathbf P_1^\mathbf a (\mathbf P_2)^\mathbf b]$, we find that the distance is given by
\begin{align}
    d  &= \min_{l \in \mathcal L[\mathcal S_{\mathbf P_1, \mathbf P_2}(C_1, C_2)]}|\mathrm{supp}(l)|\\ 
    &= \min\left[\min_{\mathbf a \in C_1\backslash C^\perp_2} \mathrm{wt}(\mathbf a),\min_{\mathbf b \in C_2\backslash C^\perp_1} \mathrm{wt}(\mathbf b) \right]\;. 
    \label{eq:distancecodes}
\end{align}
Due to the identifications in Eqs. (\ref{eq:codecycles}, \ref{eq:orthocodeboundaries}) and the definition of combinatorial systoles in Eq. (\ref{eq:systoles}), the formula
\begin{equation}
    d = \min(\mathrm{csys}_1[X(C_1, C_2)], \mathrm{csys}_1^*[X(C_1, C_2)]),
\end{equation}
is shown. The lower bound from the classical distances follows from relaxing the minimization:
\be
d  \geq \min\left[\min_{\mathbf a \in C_1 } \mathrm{wt}(\mathbf a),\min_{\mathbf b \in C_2} \mathrm{wt}(\mathbf b) \right]\;,
\ee
corresponding to $d \geq \min(d_1, d_2)$.
\end{proof}

The case for $\mathbf P_1 = (X_1, X_2, \cdots, X_n)$ and $\mathbf P_2 = (Z_1, Z_1, \cdots, Z_n)$ are usually   referred to as  CSS codes \cite{calderbank1996good,steane1996multiple}. Motivated by this, we say that a  stabilizer code has a \emph{CSS structure} if there  exists $(\mathbf P_1, C_1, \mathbf P_2, C_2)$ such that $\mathcal S= \mathcal S_{\mathbf P_1, \mathbf P_2}(C_1,C_2)$. It is the main result of this section to show that (local) projected Majorana codes are of this type. Consider the color decomposition of the faces: $F_\sigma =  F^R_\sigma \cup F^B_\sigma$. Then, we define the following chain complex:
\be
X(C_R,C_B): \mathrm{span}_{\mathbb F_2} F^R_\sigma \xrightarrow{h^T_R}\mathbb F_2^n \xrightarrow{h_B} \mathrm{span}_{\mathbb F_2} F^B_\sigma\;, 
\ee
where $h_R$ ($h_B$) maps vertices to the red (blue) faces it is tangent to. We claim:

\noindent
\begin{proposition}
Let $(\mathcal M, \Sigma)$ be a two-colorable, two dimensional Majorana fermion group. Then, 
it exists an anticommuting pair of Pauli tuples $\mathbf P_R, \mathbf P_B \in \mathcal P_n^{\times n}$ such that its local projection has a CSS structure, 
$P_\partial \mathcal M = \mathcal S_{\mathbf P_R, \mathbf P_B}(C_R, C_B)$, with classical codes $C_R = \ker(h_R) \subseteq \mathbb F_2^n$ and $C_B = \ker(h_B) \subseteq \mathbb F_2^n$ being the binary vector spaces spanned by the support of red and blue faces, respectively.
\end{proposition}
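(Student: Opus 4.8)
The plan is to build the two Pauli tuples directly from the local data of the embedding and then recognize $P_\partial\mathcal M$ as the CSS group attached to the red/blue face–incidence matrices, so that the general CSS Proposition proved above supplies its parameters. First I would recall that, by Lemma~\ref{Lemma: 2} together with the isomorphism $\mathrm C_\mathcal M(\mathcal D)\cong Z_1(X_\sigma)$, the boundary part satisfies $\mathrm C_\mathcal M^\partial(\mathcal D)\cong B_1(X_\sigma)$; since $B_1=\mathrm{im}\,\partial_2$ is generated by the face boundaries $\partial_2 f$, the group $P_\partial\mathcal M$ is generated by the projected operators $\mathrm P(S^{-1}(\partial_2 f))$, one per face $f\in F_\sigma$. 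Each such generator is a product of one Majorana bilinear per vertex of $\partial f$: at $v\in\partial f$ the two edges of $f$ meeting at $v$ select two of the four Majoranas $\gamma^a_v$, and this bilinear maps to a single-qubit Pauli via the parton dictionary (Table~\ref{eq:kitaevpartontable}). Note that every vertex of $G_\mathcal M$ has degree four, since all four Majoranas per site are paired.

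The heart of the argument is to show that this Pauli depends only on $v$ and the color of $f$. Around a degree-four vertex the four incident faces fill the four corners, and because two consecutive corner-faces share an edge, two-colorability forces the colors to alternate around $v$; hence the two faces of each color sit at opposite corners and select complementary pairs of Majoranas. Two complementary bilinears $i\gamma^a_v\gamma^b_v$ and $i\gamma^c_v\gamma^d_v$ with $\{a,b,c,d\}=\{0,1,2,3\}$ differ by the local parity $D_v$, which acts as the identity on the projected state, so they map to the same Pauli. I would therefore define $P_{R,v}$ (resp.\ $P_{B,v}$) as the common image of the two red (resp.\ blue) corners at $v$, assembling $\mathbf P_R=(P_{R,1},\dots,P_{R,n})$ and $\mathbf P_B$. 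With these definitions the generator of a red face $f$ is exactly $\mathbf P_R^{x_f}$, where $x_f\in\mathbb F_2^n$ is the vertex-indicator of $\partial f$, i.e.\ a row of $h_R$; similarly for blue faces. Hence $P_R\mathcal M=\mathcal S_{\mathbf P_R}(C_R)$ and $P_B\mathcal M=\mathcal S_{\mathbf P_B}(C_B)$ with $C_R=\ker h_R$ and $C_B=\ker h_B$.

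To finish I would verify the two structural requirements. For anticommutation, a red and a blue corner at the same vertex share exactly one Majorana (the edge separating them), so their bilinears anticommute; since the parton map preserves commutation charges, $q(\mathrm P(m_1),\mathrm P(m_2))=q(m_1,m_2)$ from the proof of Lemma~\ref{Lemma: 1}, this gives $P_{R,v}P_{B,v}=-P_{B,v}P_{R,v}$. The orthogonality $C_R^\perp\subseteq C_B$, equivalently $h_Rh_B^T=0$, is then automatic: $(h_Rh_B^T)_{fg}=|f\cap g|\bmod 2$ counts exactly the vertices at which the red generator of $f$ and the blue generator of $g$ anticommute, and this count must be even because $P_\partial\mathcal M$ is Abelian. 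Finally, Lemma~\ref{lemma3} gives $P_\partial\mathcal M\cong P_R\mathcal M\times P_B\mathcal M$, which is by construction $\mathcal S_{\mathbf P_R,\mathbf P_B}(C_R,C_B)$, establishing the CSS structure.

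I expect the principal obstacle to be the sign and orientation bookkeeping behind the well-definedness of the tuples. Showing that the two same-color corners at $v$ yield the \emph{identical} Pauli, and not merely the same Pauli up to a sign, requires tracking the dimer orientations appearing in $\partial_2 f$ and the phases generated when reordering the Clifford algebra, so that $P_{R,v}$ and $P_{B,v}$ are unambiguous single-qubit Paulis. Once this local consistency is pinned down, the passage to the incidence matrices and the appeal to the general CSS Proposition are formal.
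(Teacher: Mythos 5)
Your proposal follows essentially the same route as the paper's proof: both analyze the four edges and the four alternating-color corner faces at each degree-four vertex, assign a single-qubit Pauli to each color class of bilinears (the two same-color corners giving the same Pauli because their bilinears differ by the local parity $D_v$, consistent with the parton table), and obtain anticommutation from the fact that a red and a blue corner share exactly one Majorana mode, with the CSS factorization then following from the face-incidence matrices and Lemma \ref{lemma3}. The sign bookkeeping you flag as the principal remaining obstacle is handled in the paper not by tracking dimer orientations but by a degrees-of-freedom counting argument: the tuples are defined as $P_{R,v}=s_v p_{R,v}$ and $P_{B,v}=s'_v p_{B,v}$ with $2n$ free signs, which can always be chosen so that the at most $n$ generators of $P_\partial\mathcal M$ decompose as $\mathbf P_R^{a}\mathbf P_B^{b}$ with $a\in C_R^\perp$, $b\in C_B^\perp$.
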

\begin{proof}
\noindent
By definition, $P_\partial \mathcal M = \mathrm{im}_\mathrm P[\mathrm C^\partial_\mathcal M(\mathcal D)]$. We will reconstruct the anticommuting Pauli tuples via the topological constraints of the isomorphism $\mathrm C_\mathcal M(\mathcal D) \cong B_1(X_\sigma)$. 

Consider a vertex $v \in V$. Since $G_\mathcal M$ has degree 4 on all vertices, there must be four edges incident to it, labeled as $\{e_{0,v}, e_{1,v}, e_{2,v}, e_{3,v}\}$, where we assumed orientability of $\Sigma$  to consistently place the edges in a clockwise orientation. We assume, without loss of generality, that the corresponding Majorana bilinears on each edge have support on $E_v \equiv \{\gamma^0_v, \gamma^1_v, \gamma^2_v, \gamma^3_v\}$, respectively.
Due to the two-colorability of the embedding, there will be two red faces and two blue faces, denoted $\{f_{v,R,1}, f_{v,R,2}, f_{v,B,1}, f_{v,B,2}\}$, as illustrated in Fig. \ref{fig:vertex}(a).

\begin{figure}
    \centering
    \includegraphics[width=0.9\linewidth]{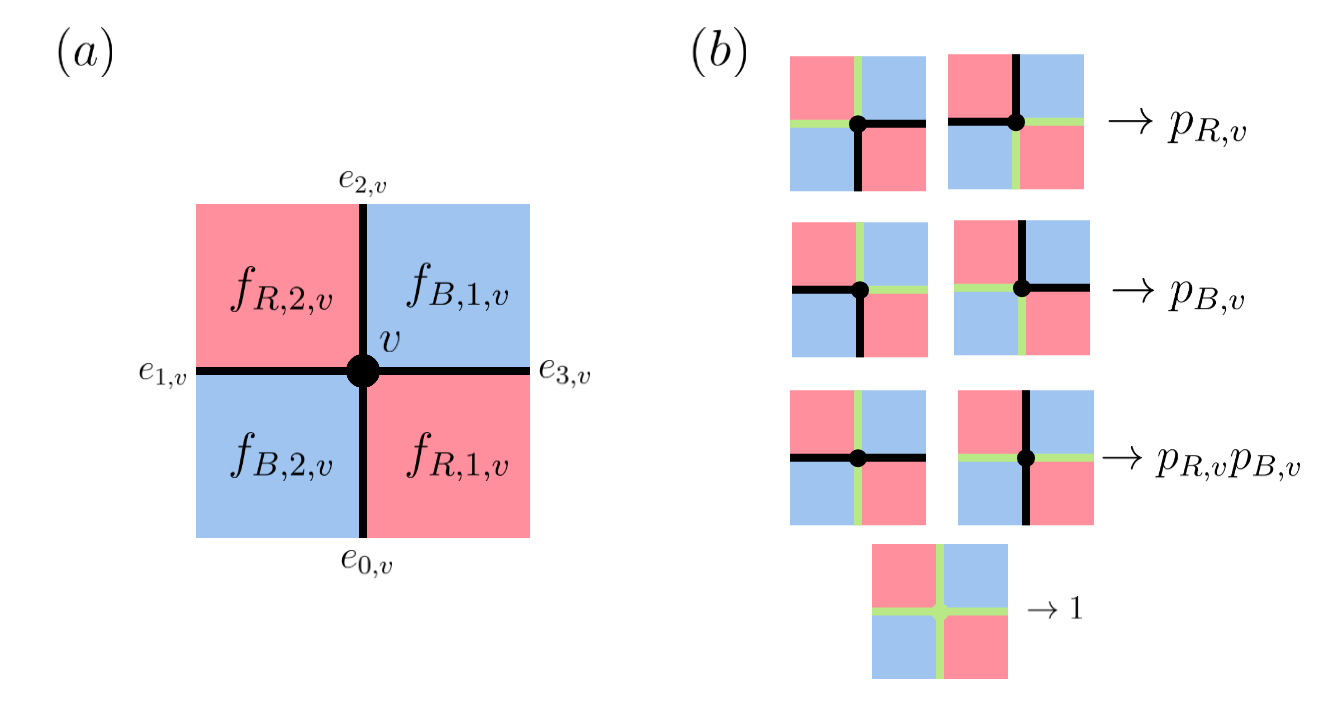}
    \caption{(a): Overall structure of a vertex $v \in V$ of a projected graph of a two-colorable $\mathcal M$. Each vertex has degree four, having four faces, two of each color, around it. (b) All seven options of $m_c$ in terms of their support on some $v$. Elements of $\mathrm{supp}_v(m_c)$ are highlighted in green, and the corresponding images after the   parton map are defined (modulo phases).}
    \label{fig:vertex}
\end{figure}

Consider now $m_c \in \mathrm{C}_\mathcal M(\mathcal D)$, and consider a vertex $v \in V$ where $m_c$ has nontrivial support. Denote $\mathrm{supp}_v(m_c)\subseteq E$ as the edges which have support on $v$. Note that $m_c$ must correspond to either one of the six configurations at the top of Fig. \ref{fig:vertex} (b) or $\mathrm{supp}_v(m_c) = E_v$, illustrated at the bottom. We will not worry about this case, since it corresponds to a local parity, and thus generates a trivial Pauli operator after projection.

By the  parton map    in Table \ref{eq:kitaevpartontable}, the six possibilities of placing the edges with degree 2 split into three classes, depending on the color of the faces the edge touches. If the edge is tangent to red faces, then it generates a Pauli that we will denote $p_{R,v}$, and if it is tangent only to blue faces, then the corresponding Pauli is $p_{B,v}$. In the last case, where the edges touch both of them, their product $p_{R,v}p_{B,v}$ is the result of the   parton map. Note that, for   distinct Paulis,
\be
p_{R,v}p_{B,v} = -p_{B,v}p_{R,v}\;.
\ee
We can then define $\mathbf P_R  = (P_{R,1}, P_{R,2}, \cdots, P_{R,n})$ and $\mathbf P_B = (P_{B,1}, P_{B,2}, \cdots, P_{B,n})$ as:
\be
P_{R,v} = s_v p_{R,v} \quad ; \quad P_{B,v} = s^\prime p_{B,v}\;,
\ee
where $s_v, s^\prime_v \in \{\pm 1\}$ are signs fixed by the equation:
\be
\mathrm P(m_c) = \mathbf P_R^{a}\mathbf P_B^{b}\;,
\ee
where $a \in C_R^\perp = \mathrm{im}(h_R^T)$, and $b \in C_B^\perp = \mathrm{im}(h_B^T)$. Since there are $2n$ signs, and $\leq n$ generators of $P_\partial \mathcal M$, they can always be fixed and the decomposition satisfied. 
 \end{proof}

Let us now prove each of the properties stated in Theorem \ref{theorem1} separately. The results follow from Propositions 2 and 3.
 
1. Topological encoding. This can be proven by computing invariants of the chain complexes involved. Given any chain complex $X$ with chains $\{X_k\}$,  its Euler characteristic is defined as
\be
\chi(X) = \sum_k (-1)^k \mathrm{dim}X_k = \sum_k (-1)^k \mathrm{dim}H_k(X)\;,
\ee
where the   last equality is a standard result in algebraic topology \cite{hatcher2001algebraic}, proving that $\chi$ is a bona-fide topological invariant. From the definition, we compute it for $X_\sigma$ and $X(C_R, C_B)$:
\begin{align}
    &\chi(X_\sigma) = |F_\sigma|-|E|+|V| = (r_R + r_B)-|E|+n\;,\\
    &\chi[X(C_R,C_B)] = r_R-n+r_B\;.
\end{align}
Since $|E| = 2n$, it follows that $\chi(X_\sigma) = \chi[X(C_R, C_B)]$. Since, by Proposition 3 and 2, $k=\mathrm{dim}H_1[X(C_R, C_B)]$ and $H_\cdot(X_\sigma) \cong H_\cdot(\Sigma, \mathbb F_2)$, we can show the result by proving that the second and zeroth homology groups of both complexes are isormophic, by   
the Betti number decomposition of the Euler invariants.

For $X_\sigma$, it turns out that the two homology groups are already fixed by the two conditions imposed on $\Sigma$ \cite{hatcher2001algebraic}:
\begin{align}
    &\mathrm{\Sigma\;is\;orientable}  \Rightarrow  H_2(X_\sigma)  \cong H_2(\Sigma, \mathbb F_2) \cong \mathbb F_2\;,\\
    &\mathrm{\Sigma\;is\;simply\;connected}  \Rightarrow H_0(X_\sigma) \cong H_0(\Sigma, \mathbb F_2) \cong \mathbb F_2\;.
\end{align}
Furthermore, we can write down the corresponding nontrivial representatives:
\begin{align}
    & \left[\sum_{f \in F_\sigma}f\right] \in H_2(X_\sigma)\;,\\
    & \left[v\right] \in H_0(X_\sigma)\;,
\end{align}
where, in the second line, we can take any $v \in V = \{1,2, \cdots, n\}$. The interpretation is as follows: As $H_2(X_\sigma) \cong \ker(\partial_2)$, the only nontrivial combination of the faces that
vanishes under the boundary map is the linear superposition of all of them, which corresponds to an orientation for $\Sigma$. For the zeroth homology group, $H_0(X_\sigma) \cong X_0/ \mathrm{im}\partial_1$ measures what elements of $X_0$ can be connected by elements of $X_1$. Since $\Sigma$ is simply connected (and also  $G_\mathcal M$, which must be for the cellular embedding on $\sigma$ to be valid), any pair of vertices $v_1, v_2 \in V$ can be connected by some $x \in X_1$, such that $\partial_1 x = v_1 + v_2$. Hence, any vertex is a good representative for $H_0$.

In the following, we will compute the same homologies for $X(C_R,C_B)$. By the triviality of the inclusion $0 \hookrightarrow \mathbb F_2^{r_B}$, we also have: $H_2[X(C_R,C_B)] \cong \ker(h_B^T: \mathbb F_2^{r_B}\to \mathbb F_2^n)$. Again, consider a set $S$ of blue faces such that
\begin{equation}
    h_B^T\left(\sum_{f \in S} f\right) = 0\;.
\end{equation}
Solutions of this equation provide nontrivial representatives of homology. As illustrated in Fig. \ref{fig:vertex} (a) and discussed in the proof of Proposition 3, for each vertex $v$ there are two blue faces. If $S \neq \varnothing$, then it must correspond to the set of all faces  to cancel the vertices pairwise. Thus, $H_2[X(C_R, C_B)] \cong \mathbb F_2$.

Now, we want to show that any pair of faces $f_R, f_R^\prime$ can be obtained as the image of the parity check acting on a set of vertices. Imagine that they do not share a vertex (otherwise, we would be done, since such a vertex $v$ satisfies $h_R (v) = f_R + f^\prime_R$). Then, we denote by $v_R$ and $v_R^\prime$  a pair of vertices that are tangent to them, respectively. Then, since the projected graph is simply connected, there exists $l \in X_1$ such that $\partial l = v_R + v^\prime_R$. We assume, without loss of generality, that in the set of vertices along $l$, $V(l)$, only $v_R$ and $v^\prime_R$ are tangent to the end faces. Now, let $F_R(l)$ be the set of red faces whose boundary is intersected by  $l$. Consider
\be
V_R(l) \equiv \{v \in V(l) \;|\;\{f_{v,R,1}, f_{r,R,2}\}\subseteq F_R(l)\}\;,
\ee
that is, the set of vertices whose corresponding tangent red faces lie in the path. Then, it follows that
\be
h_R\left(v_R + \sum_{v \in V_R(l)} v+ v^\prime_R\right) = f_R + f^\prime_R\;,
\ee
since only the red faces at the endpoints do not cancel upon the action of the parity check. Hence, $H_0[X(C_R, C_B)]\cong \mathbb F_2$.

2. String-like symmetries. This result follows from the discussion of Lemma 2 and Proposition 1. Let $m \in \mathrm{C}_\mathcal M(\mathcal D)$ be a Majorana operator which is mapped into $S(m)\in Z_1(X_\sigma)\backslash B_1(X_\sigma)$ in the isomorphism $S:\mathrm C_\mathcal M(\mathcal D) \to Z_1(X_\sigma)$. By definition, $\mathrm P(m) \in P_h \mathcal M$. Since every vertex in the support of $S(m)$ must have degree at least two, and the support of $\mathrm P (m)$ is upper bounded by the number of vertices in the support of $m$, we have the inequality:
\be
|\mathrm{supp}\mathrm P(S^{-1}(z))| \leq \mathrm{wt}S(z) \quad ; \quad \forall z \in Z_1(X_\sigma) \backslash B_1(X_\sigma)\;,
\ee
where we denote $z = S^{-1}(m)$. Minimizing over $z$ yields the combinatorial 1-systole, giving our desired result: $d \leq \mathrm{csys}_1(X_\sigma)$, which is indeed the minimal length cycle in $G_ \mathcal M$.

3. Anyon excitations. Although our construction does not involve a Hamiltonian for the spin state,   we can define   anyon excitations by  classifying the action of local operators on $|P\psi_\mathcal M\rangle$. We will do so  by first introducing Wilson line operators for the anyons  and then showing that any nontrivial Pauli operator can be decomposed in terms of these elementary excitations.

Let us recall the definition of an Abelian anyon theory \cite{wang2020and}, taking the pragmatic approach of Ref. \cite{ellison2022pauli}, which also discusses how to construct anyon data from stabilizer models. It is defined by a tuple $\mathcal A = (A, \theta)$, where $A$ is an Abelian group and $\theta: A  \to U(1)$ satisfying
\begin{itemize}
    \item $\theta(a^n) = \theta(a)^n$, $\forall a \in A$, $n \in \mathbb Z_{>0}$, 
    \item The function $B_\theta: A \times A \to U(1)$, defined by $B_\theta(a, a^\prime) 
    \equiv \theta(a a^\prime)/[\theta(a)\theta(a^\prime)]$, satisfying $B_\theta(a^n, a^\prime) =B_\theta(a,[a^\prime]^n)=B_\theta(a,a^\prime)$.
\end{itemize}
The function $\theta$ assigns a phase $\theta(a) \equiv e^{2\pi i q(a)}$ to the anyon $a$, obtained when two identical anyons are exchanged.

For the case of the toric code, we denote $\mathcal T \mathcal C = (A_\mathrm{TC}, \theta)$, with $A_\mathrm{TC}  = \{1,e,m,f\}$, endowed with the product $\times$, satisfying
\begin{align}
    1 &= e\times e   = m \times m = f\times f\;,\\
    f &= e\times m\;.
\end{align}
Noting that the $e$ and $m$ anyons generate the group, we can write $A_\mathrm{TC} \cong \langle e, m \rangle$. The corresponding phases are
\begin{equation}
\theta(1) = \theta(e) = \theta(m) =-\theta(f) = 1\;.
\end{equation}
To construct directly from the anyon data from the  microscopic lattice model, we will follow the prescription of Refs. \cite{levin2006quantum,kawagoe2020microscopic,haah2023nontrivial}. Let $\gamma_R, \gamma_B \in \mathbb F_2^n$ be elements such that $h_R(\gamma_R) = f_{R,1} + f_{R,2}$ and $h_B(\gamma_B) = f_{B,1} + f_{B,2}$. Then, we define
\begin{equation}
    W^e_{\gamma_R} \equiv \mathbf P_B^{\gamma_R }\quad ; \quad W^m_{\gamma_B} \equiv \mathbf P_R^{\gamma_B}\;.
    \label{eq:wilsoneandm}
\end{equation}
Note that they commute with all stabilizer generators, with the exception of the ones coming from the parity-check images. Indeed, we claim that they correspond to microscopic realization of Wilson line operators for the $e$ and $m$ anyons, by creating a pair of anyons at the corresponding endpoints. Their topological nature is immediate since their charges are invariant upon taking $(W_{\gamma_R}^e, W_{\gamma_B}^m) \to (W_{\gamma^\prime_R}^e, W_{\gamma^\prime_B}^m)$, with $\gamma^\prime_R =\gamma_R + z_R$ and $\gamma^\prime_B = \gamma_B + z_B$, where $z_R \in \mathrm{ker}(h_R)$ and $z_B \in \mathrm{ker}(h_B)$. We  denote the group generated by all such Wilson lines by $\mathcal W \equiv \langle \{W^e_{\gamma_R}, W^m_{\gamma_B} \}_{\gamma_R, \gamma_B} \rangle$.

Now, consider any $O \in \mathcal{P}_n$, which must have the following decomposition:
\be
O \propto \mathbf P_R^a \mathbf P_B^b\;,
\ee
with some $a,b \in \mathbb F_2$. Denote the $F_R(O), F_B(O) \subseteq F_\sigma$ as the set of faces with which  $O$ anticommutes. These correspond to elements $f_R \in X_2$, $f_B \in X_2$ whose parity-check images satisfy $a \cdot h_B(f_B) = 1\mod 2$ and $b \cdot h_R(f_R) =1\mod2$. We have the constraint
\begin{align}
&\sum_{f_B \in F^B_\sigma} a \cdot h_B(f_B)=0\mod2 \;,\\
&\sum_{f_R \in F^R_\sigma} b \cdot h_R(f_R) = 0\mod 2\;,
\end{align}
since $\sum_{f_B}h_B(f_B) = \sum_{f_R} h_R(f_R) = 0$, due to the fact that every edge is tangent to two red and blue faces. It follows  that $|F_R(O)|$ and $|F_B(O)|$ are both even. But then, by the fact that $H_0[X(C_R, C_B)] \cong H^0[X(C_R, C_B)]\cong \mathbb F_2$, the red and blue faces can be paired up, and thus $a$ $b$ can be decomposed into $a = \sum_l \gamma_{R, l}$ and $b= \sum_n \gamma_{B,n}$ such that
\begin{align}
&h_B \left(\sum_{l=1}^{|F_B(O)|/2} \gamma_{R,l}\right) = \sum_{f_B \in F_B(O)}f_B\\
&h_R\left(\sum_{n=1}^{|F_R(O)|/2}\gamma_{B,n}\right) =\sum_{f_B \in F_B(O)}f_B\;,
\end{align}
meaning that every operator can be written as
\be
O \propto \prod_l W^e_{\gamma_{R,l}} \prod_{n} W^m_{\gamma_{B,n}} \in \mathcal W\;.
\ee
This proves the isomorphism $\mathcal W \cong \mathcal P_n$, showing that every Pauli can be written as a product of Wilson lines. We note that there is a homomorphism $\mathcal W \to A_\mathrm{TC}$ defined on the generators as
\be
W^e_{\gamma_R} \mapsto e \quad ; \quad W^m_{\gamma_B} \mapsto m\;.
\ee
The $f$ anyon is obtained under this homomorphism by the image of $W^f_{\gamma_R, \gamma_B}\equiv W^e_{\gamma_R} W^m_{\gamma_B}$.
Hence, we have proven that every operator action on the code  corresponds to a toric code superselection sector.
 
The self-statistics of the anyon is also encoded in the algebra of Wilson lines. Let $\gamma_1, \gamma_2, \gamma_3$ be three anyon paths oriented clockwise around $v \in V$. Then, given $a \in A$ with Wilson line supported on $\gamma$ as $W^a_\gamma$, $\theta(a)$ generically satisfies
\begin{equation}
W^a_{\gamma_1}[W^a_{\gamma_2}]^\dagger W^a_{\gamma_3} = \theta(a )W^a_{\gamma_3}[W^a_{\gamma_2}]^\dagger W^a_{\gamma_1}\;,
\end{equation}
For our case of Hermitian Wilson lines, we have $W^a_{\gamma_1}W^a_{\gamma_2} W^a_{\gamma_3} = \theta(a )W^a_{\gamma_3}W^a_{\gamma_2} W^a_{\gamma_1}$. For the $e$ and $m$ anyons, all Wilson lines commute among themselves, made clear by the definition in Eq. (\ref{eq:wilsoneandm}). Thus, $\theta(e)  = \theta(m)$. However, two lines  for the $f$ anyon anticommute if they share a vertex. Hence, exchanging the order gives us $\theta(f)=-1$. This shows that the full anyon theory does indeed correspond to $\mathcal T \mathcal C$ and thus proves the theorem.  

\bibliography{refs}

\end{document}